\renewcommand{\epsilon}{\varepsilon}
\newcommand{\dc}{d_C}
\newcommand{\N}{\mathbb{N}}
\newcommand{\Z}{\mathbb{Z}}
\newcommand{\zz}{{\mathbb{Z}^2}}
\newcommand{\M}{\mathbb{M}}
\newcommand{\U}{\mathbb{U}}
\newcommand{\A}{\mathcal{A}}
\newcommand{\azz}{\mathcal{A}^{\mathbb{Z}^2}}
\newcommand{\am}{\mathcal{A}^{\mathbb{M}}}
\newcommand{\s}{\sigma}
\newcommand{\dd}{\delta}
\newcommand{\e}{\epsilon}
\newcommand{\gs}{\Sigma}
\newcommand{\acf}{F}
\newcommand{\obst}{\Sigma_{\acf}}
\newcommand{\obstset}{\A_F}
\newcommand{\oT}{\downarrow}
\newcommand{\oB}{\uparrow}
\newcommand{\oL}{\rightarrow}
\newcommand{\oR}{\leftarrow}
\newcommand{\oTL}{\searrow}
\newcommand{\oTR}{\swarrow}
\newcommand{\oBL}{\nearrow}
\newcommand{\oBR}{\nwarrow}
\newcommand{\roto}{\text{$\curvearrowright$}}
\newcommand{\rotb}{\text{\raisebox{1ex}{\rotatebox{180}{$\curvearrowright$}}}}
\newcommand{\slot}[1]{\text{\hbox to .8cm{\hfill\vbox to.2cm{\vfill \hbox{$#1$}\vfill}\hfill}}}
\newcommand{\minislot}[1]{\text{\hbox to.4cm{\hfill\vbox to.2cm{\vfill \hbox{$#1$}\vfill}\hfill}}}
\newcommand{\posout}{\blacksquare}
\newcommand{\posin}{ }
\newcommand{\nslot}[1]{{\hbox to.4cm{\hfill\vbox to.2cm{\vfill \hbox{$#1$}\vfill}\hfill}}}
\newcommand{\equpt}{\mathcal{E}_q}
\newcommand{\sensi}{\mathcal{S}}
\newcommand{\nono}{\mathcal{N}}
\newcommand{\phivarouch}{\Phi_1}
\newcommand{\phiclassdir}{\Phi_2}
\newcommand{\phiclassinv}{\Phi_3}
\newcommand{\phiemboit}{\Phi_4}
\newcommand{\machine}[1]{\mathcal{M}_{#1}}
\newcommand{\obstacl}[1]{\Sigma_{#1}}
\newcommand{\tileset}[1]{T_{#1}}
\newcommand{\outo}{-}
\newcommand{\transirot}[9]{
  \begin{array}[c]{c}
    \roto{}\\
    \text{$\begin{array}{c|c|c}
          \slot{#1} & \slot{#2} & \slot{#3}\\
          \hline
          \slot{#4} & \slot{#5} & \slot{#6}\\
          \hline
          \slot{#7} & \slot{#8} & \slot{#9}
        \end{array}$}\\
    \rotb{}
  \end{array}\mapsto 
}
\newcommand{\transi}[9]{
    \text{$\begin{array}{c|c|c}
          \slot{#1} & \slot{#2} & \slot{#3}\\
          \hline
          \slot{#4} & \slot{#5} & \slot{#6}\\
          \hline
          \slot{#7} & \slot{#8} & \slot{#9}
        \end{array}$}\mapsto 
}
\title{Topological Dynamics of 2D Cellular Automata}
\author{Mathieu Sablik\thanks{\email{mathieu.sablik@umpa.ens-lyon.fr}, \email{sablik@cmi.univ-mrs.fr}}\inst{1}
\and Guillaume Theyssier\thanks{\email{guillaume.theyssier@univ-savoie.fr}}\inst{2}}
\institute{UMPA, (UMR
  5669 --- CNRS, ENS Lyon), 46, all\'ee d'Italie 69364 Lyon cedex 07
  FRANCE\\
  LATP, (UMR 6632 --- CNRS, Universit\'e de Provence), CMI, 
  Universit\'e de Provence, Technop\^ole Ch\^ateau-Gombert, 39, rue F. Joliot Curie, 13453 Marseille Cedex 13 FRANCE 
  \and LAMA, (UMR 5127 --- CNRS, Universit\'e de Savoie), Campus
  Scientifique, 73376 Le Bourget-du-lac cedex FRANCE}
\begin{document}

\maketitle

\begin{abstract}
  Topological dynamics of cellular automata (CA), inherited from
  classical dynamical systems theory, has been essentially studied in
  dimension 1. This paper focuses on 2D CA and aims at showing that
  the situation is different and more complex. The main results are
  the existence of non sensitive CA without equicontinuous points, the
  non-recursivity of sensitivity constants and the existence of CA
  having only non-recursive equicontinuous points. They all show a
  difference between the 1D and the 2D case.  Thanks to these new
  constructions, we also extend undecidability results concerning
  topological classification previously obtained in the 1D case.
\end{abstract}

\section{Introduction}

Cellular automata were introduced by J. von Neumann as a simple formal
model of cellular growth and replication. They consist in a discrete
lattice of finite-state machines, called {\em cells}, which evolve
uniformly and synchronously according to a local rule depending only
on a finite number of neighboring cells. A snapshot of the states of
the cells at some time of the evolution is called a {\em
  configuration}, and a cellular automaton can be view as a global
action on the set of configurations.

Despite the apparent simplicity of their definition, cellular automata
can have very complex behaviours. One way to try to understand this
complexity is to endow the space of configurations with a topology and
consider cellular automata as classical dynamical systems. With such a
point of view, one can use well-tried tools from dynamical system
theory like the notion of sensitivity to initial condition or the
notion of equicontinuous point.

This approach has been followed essentially in the case of
one-dimensional cellular automata. P.  K{\accent"17 u}rka has shown
in~\cite{Kurka97} that 1D cellular automata are partitioned into two
classes:
\begin{itemize}
\item $\equpt$, the set of cellular automata with equicontinuous points,
\item $\sensi$, the set of sensitive cellular automata.
\end{itemize}
We stress that this partition result is false in general for classical
(continuous) dynamical systems. Thus, it is natural to ask whether
this result holds for the model of CA in any dimension, or if it is a
``miracle'' or an ``anomaly'' of the one-dimensional case due to the
strong constraints on information propagation in this particular
setting.  One of the main contributions of this paper is to show that
this is an anomaly of the 1D case (section~\ref{sec:sensitivity}):
there exist a class $\nono$ of 2D CA which are neither in $\equpt$ nor
in $\sensi$.

Each of the sets $\equpt$ and $\sensi$ has an extremal sub-class:
equicontinous and expansive cellular automata (respectively).  This
allows to classify cellular automata in four classes according to the
degree of sensitivity to initial conditions. The dynamical properties
involved in this classification have been intensively studied in the
literature for 1D cellular automata (see for
instance~\cite{Kurka97,BlanchardMaass,tisseur,permExp}).
Moreover, in~\cite{varouch}, the undecidability of this classification
is proven, except for the expansivity class whose decidability remains
an open problem.

In this paper, we focus on 2D CA and we are particularly interested in
differences from the 1D case. As said above, we will prove in
section~\ref{sec:sensitivity} that there is a fundamental difference
with respect to the topological dynamics classification, but we will
also adopt a computational complexity point of view and show that some
properties or parameters which are computable in 1D are non recursive
in 2D (proposition~\ref{prop:constant} and \ref{prop:nonrecpoint} of
section~\ref{sec:classif}). To our knowledge, only few
dimension-sensitive undecidability results are known for CA
(\cite{kari94,Bernardi}).  However, we believe that such subtle
differences are of great importance in a field where the common belief
is that everything interesting is undecidable.

Moreover, we establish in section~\ref{sec:classif} several complexity
lower bounds on the classes defined above and extend the
undecidability result of~\cite{varouch} to dimension 2. Notably, we
show that each of the class $\equpt$, $\sensi$ and $\nono$ is neither
recursively enumerable, nor co-recursively enumerable. This gives new
examples of ``natural'' properties of CA that are harder than the
classical problems like reversibility, surjectivity or nilpotency
(which are all r.e. or co-r.e.).

\section{Definitions}

Let $\A$ be a finite set and $\M=\Z$ (for the one-dimensional case) or
$\Z^2$ (for the two-dimensional case). We consider $\am$, the {\em
  configuration space} of $\M$-indexed sequences in $\A$. If $\A$ is
endowed with the discrete topology, $\am$ is compact, perfect and
totally disconnected in the product topology.  Moreover one can define
a metric on $\am$ compatible with this topology:
$$\forall x,y\in\am,\quad \dc(x,y)=2^{-\min\{\|i\|_\infty: x_i\ne y_i \ i\in\M \}}.$$

Let $\U\subset\M$. For $x\in\am$, denote $x_{\U}\in\A^{\U}$ the
restriction of $x$ to $\U$.  Let $\U\subset\M$ be a finite subset,
$\gs$ is a {\em subshift of finite type of order $\U$} if there exists
$\mathcal{F}\subset\A^{\U}$ such that $x\in\gs \Longleftrightarrow
x_{m+\U}\in\mathcal{F} \quad \forall m\in\M$. In other word, $\gs$ can
be viewed as a tiling where the allowed patterns are in $\mathcal{F}$.

In the sequel, we will consider \emph{tile sets} and ask whether they
can tile the plane or not. In our formalism, a tile set is a subshift
of finite type: a set of states (the tiles) given together with a set
of allowed patterns (the tiling constraints). We will restrict to
$2\times 1$ and $1\times 2$ patterns (dominos) since it is sufficient
to have the undecidability results of Berger~\cite{berger}.

A {\em cellular automaton} (CA) is a pair $(\am,F)$ where
$F:\am\to\am$ is defined by $F(x)_m=f((x_{m+u})_{u\in\U})$ for all
$x\in\am$ and $m\in\M$ where $\U\subset\Z$ is a finite set named {\em
  neighborhood} and $f:\A^{\U}\rightarrow\A$ is a {\em local
  rule}. The radius of $F$ is $r(F)=\max\{\|u\|_\infty:u\in\U\}$. By
Hedlund's theorem~\cite{hedlund}, it is equivalent to say that $F$ is
a continuous function which commutes with the shift (i.e. $\s^m\circ
F=F\circ\s^m$ for all $m\in\M$).

We recall here general definitions of topological dynamics used all
along the article. Let $(X,d)$ be a metric space and $F:X\to X$ be a
continuous function.

$\bullet$ $x\in X$ is an {\em equicontinuous point} if for all $\e>0$,
there exists $\dd>0$, such that for all $y\in X$, if $d(x,y)<\dd$ then
$d(F^n(x),F^n(y))<\e$ for all $n\in\N$.

$\bullet$ $(X,F)$ is {\em sensitive} if there exists $\e>0$ such that
for all $\dd>0$ and $x\in X$, there exists $y\in X$ and $n\in\N$ such
that $d(x,y)<\dd$ and $d(F^n(x),F^n(y))>\e$. 

\section{Non Sensitive CA Without Any Equicontinuous Point}
\label{sec:sensitivity}

In this section, we will construct a 2D CA which has no equicontinuous
point and is not sensitive to initial conditions. This is in contrast
with dimension 1 where any non-sensitive CA must have equicontinuous
points as shown in~\cite{Kurka97}.

The CA (denoted by $\acf$ in the following) is made of two components:
\begin{itemize}
\item an \emph{obstacle component} (almost static) for which only
  finite type conditions are checked and corrections are made locally
  ;
\item a \emph{particle component} whose overall behaviour is to move
  left and to bypass obstacles.
\end{itemize}

Formally, $\acf$ has a Moore's neighborhood of radius $2$ ($25$
neighbors) and a state set $\A$ with 12 elements :
${\A=\bigl\{U,D,0,1,\oT,\oB,\oR,\oL,\oTR,\oTL,\oBR,\oBL\bigr\}}$ where
the subset ${\obstset=\{1,\oT,\oB,\oR,\oL,\oTR,\oTL,\oBR,\oBL\}}$
corresponds to the obstacle component and $\{U,D,0\}$ to the particle
component.

Let $\obst$ be the subshift of finite type of ${\azz}$ defined by the
set of allowed patterns constituted by all the $3\times 3$ patterns
appearing in the following set of finite configurations:

\[\footnotesize
\begin{matrix}
  \minislot{*} & \minislot{*} & \minislot{*} & \minislot{*} & \minislot{*} & \minislot{*} & \minislot{*} & \minislot{*} & \minislot{*} & \minislot{*} \\
  \minislot{*} & \minislot{*} & \minislot{*} & \minislot{*} & \minislot{*} & \minislot{*} & \minislot{*} & \minislot{*} & \minislot{*} & \minislot{*} \\
  \minislot{*} & \minislot{*} & \minislot{*} & \minislot{\oTL} & \minislot{\oT} & \minislot{\oT} & \minislot{\oT} & \minislot{\oTR} & \minislot{*} & \minislot{*} \\
  \minislot{*} & \minislot{*} & \minislot{*} & \minislot{\oL} & \minislot{1} & \minislot{1} & \minislot{1} & \minislot{\oR} & \minislot{*} & \minislot{*} \\
  \minislot{*} & \minislot{*} & \minislot{*} & \minislot{\oL} & \minislot{1} & \minislot{1} & \minislot{1} & \minislot{\oR} & \minislot{*} & \minislot{*} \\
  \minislot{*} & \minislot{*} & \minislot{*} & \minislot{\oL} & \minislot{1} & \minislot{1} & \minislot{1} & \minislot{\oR} & \minislot{*} & \minislot{*} \\
  \minislot{*} & \minislot{*} & \minislot{*} & \minislot{\oBL} & \minislot{\oB} & \minislot{\oB} & \minislot{\oB} & \minislot{\oBR} & \minislot{*} & \minislot{*} \\
  \minislot{*} & \minislot{*} & \minislot{*} & \minislot{*} & \minislot{*} & \minislot{*} & \minislot{*} & \minislot{*} & \minislot{*} & \minislot{*} \\
  \minislot{*} & \minislot{*} & \minislot{*} & \minislot{*} & \minislot{*} & \minislot{*} & \minislot{*} & \minislot{*} & \minislot{*} & \minislot{*}
\end{matrix}
\]

where $\ast$ stand for any state in $\A\setminus\obstset$.  

In the sequel, a configuration $x$ is said to be \emph{finite} if the
set ${\bigl\{z : x(z)\not=0\bigr\}}$ is finite. Moreover, in such a
configuration, we call \emph{obstacle} a maximal $4$-connected region
of states from $\obstset$.

The following lemma (the proof is straightforward) states that finite
configurations from $\obst$ consist of rectangle obstacles inside a
free $\A\setminus\obstset$ background. Moreover, obstacles are spaced
enough to ensure that any position ``sees'' at most one obstacle in
its $3\times 3$ neighborhood.

\begin{lemma}
  Let ${x\in\obst}$ be a finite configuration. For any $z\in\zz$ we
  have the following:
  \begin{itemize}
  \item either ${x(z)\in\obstset}$ and $z$ belongs to a rectangular obstacle;
  \item or ${x(z)\not\in\obstset}$ and the set of positions
    ${\bigl\{z' : x(z')\in\obstset\text{ and }\|z'-z\|_\infty\leq 1\bigr\}}$
    is empty or belongs to the same obstacle.
  \end{itemize}
\end{lemma}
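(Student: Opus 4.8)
The plan is to reason directly from the definition of the subshift $\obst$ as the set of configurations all of whose $3\times 3$ patterns occur in the displayed ``model'' finite configuration(s). The key observation is that in the model configuration every obstacle appears as a solid rectangle whose border is decorated with the directional symbols: corners carry $\oTL,\oTR,\oBL,\oBR$, the top and bottom edges carry $\oT$ and $\oB$, the left and right edges carry $\oL$ and $\oR$, and the interior is filled with $1$'s; moreover any obstacle is surrounded by a thick enough ring of $\ast$-cells (states outside $\obstset$). So the set $\mathcal F$ of allowed $3\times 3$ patterns consists of exactly these ``rectangle interior / rectangle border / near-a-rectangle-border / far-from-any-obstacle'' local pictures.

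First I would fix a finite $x\in\obst$ and a position $z$ with $x(z)\in\obstset$, and run the following local propagation argument. Looking at the $3\times 3$ neighborhood of $z$ and matching it against $\mathcal F$, one sees that the local symbol at $z$ forces the shape of its immediate surroundings: a $1$ is surrounded (in the relevant directions) by $1$'s or by the appropriate edge symbols; an edge symbol such as $\oT$ has $\oT$'s or corner symbols to its left/right and $1$'s (or, for thin obstacles, the opposite edge) below it, and $\ast$'s above it; a corner symbol such as $\oTL$ has exactly one $\oT$ to its right, one $\oL$ below it, $1$ (or another border symbol) diagonally inward, and $\ast$'s on the outer side and outer corner. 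Chaining these forced patterns, starting from any obstacle cell and walking along the border symbols, I get that the obstacle cell $z$ lies on (or inside) a closed rectangular loop of border symbols with a $1$-filled interior — i.e. the $4$-connected component of $z$ in $\obstset$ is a rectangle. Since $x$ is finite this walk must close up; the finiteness is exactly what rules out infinite ``half-plane'' or ``strip'' obstacle configurations that also satisfy all local constraints. This establishes the first bullet.

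For the second bullet I would take $z$ with $x(z)\notin\obstset$ and examine the $3\times 3$ window centered at $z$. Every allowed pattern containing a $\ast$ in the center falls into one of two kinds: patterns entirely made of $\ast$'s, or patterns where the $\ast$ center is adjacent to part of a single obstacle's border (one edge, or one corner, possibly a couple of collinear edge cells) — the spacing in the model configuration guarantees no allowed $3\times 3$ pattern can contain border symbols coming from two different obstacles. Hence the set $\{z' : x(z')\in\obstset,\ \|z'-z\|_\infty\le 1\}$ is either empty or contained in one obstacle, and by the first bullet that obstacle is a rectangle.

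The only genuinely delicate point — and the one I expect to be the main obstacle, though it is still routine — is the bookkeeping that the border-symbol propagation really does force a \emph{closed} rectangle and cannot produce degenerate or ``leaking'' configurations (e.g.\ an L-shaped or T-shaped arrangement of border symbols, or a corner symbol with the wrong edges attached). This amounts to checking, case by case over the nine symbols in $\obstset$, which $3\times 3$ patterns are actually present in the model configuration and verifying that the induced adjacency rules for border symbols are rigid enough to admit only rectangles; combined with finiteness to force closure, this completes the proof. This is precisely why the lemma statement flags the proof as straightforward: it is a finite, if slightly tedious, pattern-matching verification.
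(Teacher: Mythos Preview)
Your proposal is correct and is exactly the kind of direct pattern-matching verification the paper has in mind: the paper gives no proof at all, merely noting parenthetically that ``the proof is straightforward,'' and your local-propagation argument from the allowed $3\times 3$ patterns is the natural way to carry out that straightforward check.
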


The local transition function of $\acf$ can be sketched as follows:
\begin{itemize}
\item states from $\obstset$ are turned into $0$'s if finite type
  conditions defining $\obst$ are violated locally and left unchanged
  in any other case ;
\item states $U$ and $D$ behave like a left-moving particle when $U$
  is just above $D$ in a background of $0$'s, and they separate to
  bypass obstacles, $U$ going over and $D$ going under, until they
  meet at the opposite position and recompose a left-moving particle
  (see figure~\ref{fig:partdyn}).
\end{itemize}

\begin{figure}
  \centering
  \includegraphics[width=.4\linewidth]{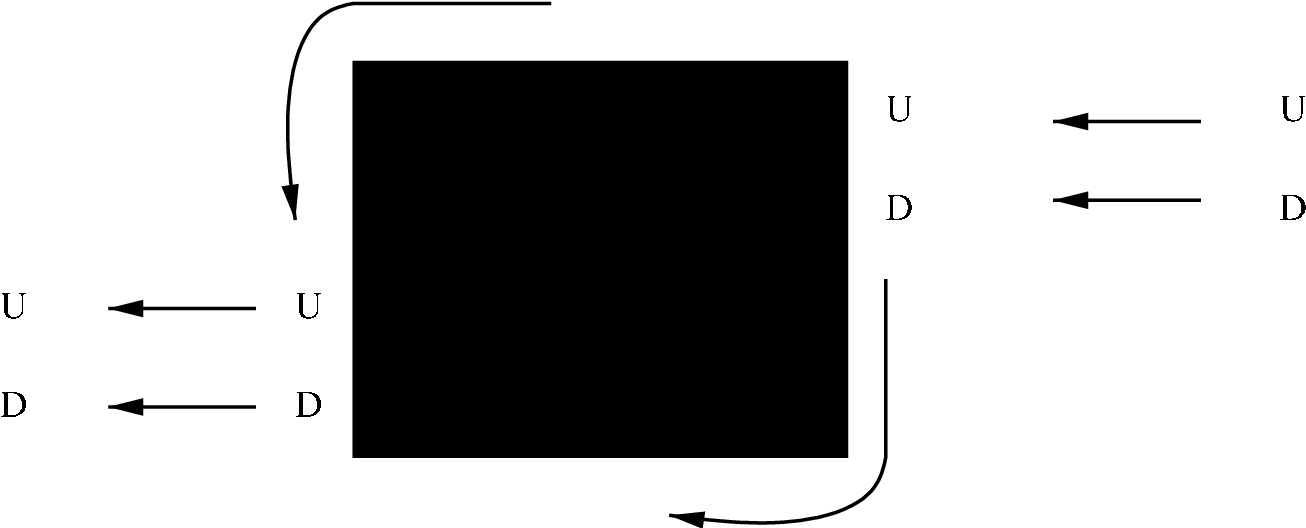}
  \caption{A particle separating into two parts ($U$ and $D$) to
    bypass an obstacle (the black region).}
  \label{fig:partdyn}
\end{figure}

A precise definition of the local transition function of $\acf$ is the following:
\begin{enumerate}
\item if the neighborhood ($5\times 5$ cells) forms a pattern
  forbidden in $\obst$, then turn into state $0$ ;
\item else, apply (if possible) one of the transition rules depending
  only on the ${3\times 3}$ neighborhood detailed in
  figure~\ref{fig:defi}
  \begin{figure}
    \centering
    {\tiny\[\begin{array}{c@{\hspace{.5cm}}c@{\hspace{.5cm}}c@{\hspace{.5cm}}c}
        \transirot{\ast}{\ast}{\ast}{\obstset}{x}{\ast}{\obstset}{\obstset}{\ast}x,&
        \transirot{\obstset}{\obstset}{\ast}{\obstset}{x}{\ast}{\obstset}{\obstset}{\ast}x,&
        \transi{\obstset}{\obstset}{\obstset}{\obstset}{x}{\obstset}{\obstset}{\obstset}{\obstset}x,\\\\
        \transi{0/\obstset}{0}{0}{\obstset}{0}{0}{\obstset}{U}{0}U, &
        \transi{0}{0}{0}{0}{0}{0}{\obstset}{U}{0}U, &
        \transi{0}{0}{0}{0}{0}{U}{\obstset}{\obstset}{D/\obstset}U,
        \\\\ \transi{0}{0}{0}{0}{0}{U}{0}{0/\obstset}{\obstset}U, &
        \transi{0}{U}{0/\obstset}{0}{0}{\obstset}{0}{0}{\obstset}U, & \transi{0}{U}{\obstset}{0}{0}{\obstset}{0}{0}{D}U, \\\\
        \transi{\obstset}{D}{0}{\obstset}{0}{0}{0/\obstset}{0}{0}D, &
        \transi{\obstset}{D}{0}{0}{0}{0}{0}{0}{0}D, &
        \transi{\obstset}{\obstset}{U/\obstset}{0}{0}{D}{0}{0}{0}D,
        \\\\ \transi{0}{0/\obstset}{\obstset}{0}{0}{D}{0}{0}{0}D, &
        \transi{0}{0}{\obstset}{0}{0}{\obstset}{0}{D}{0/\obstset}D,& \transi{0}{0}{U}{0}{0}{\obstset}{0}{D}{\obstset}D\\\\
        \transi{0/\obstset}{0/\obstset}{U}{0/\obstset}{0}{D}{0/\obstset}{0/\obstset}{0/\obstset}D,& \transi{0/\obstset}{0/\obstset}{0/\obstset}{0/\obstset}{0}{U}{0/\obstset}{0/\obstset}{D}U\\
      \end{array}\]}
    \caption{Transition rule of $\acf$ where $x$ stands for any state
      in $\obstset$, '$\ast$' means any state in $\A\setminus\obstset$
      (2 occurrences of $\ast$ are independent), and curved arrows
       mean that the transition is the same for
      any rotation of the neighborhood pattern.}
\label{fig:defi}
\end{figure}

\item in any other case, turn into state $0$.
\end{enumerate}

The possibility to form arbitrarily large obstacles prevents $\acf$ from
being sensitive to initial conditions.

\begin{proposition}
  \label{prop:nosens}
  $\acf$ is not sensitive to initial conditions.
\end{proposition}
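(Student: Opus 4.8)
The plan is to show that for every $\epsilon>0$, every configuration $x$, and every $\delta>0$, we can find a configuration $y$ with $\dc(x,y)<\delta$ whose forward orbit stays $\epsilon$-close to that of $x$ forever — this is exactly the negation of sensitivity. Fix $\epsilon$ and let $n$ be such that $2^{-n}<\epsilon$; it suffices to keep the orbits of $x$ and $y$ equal on the central window $[-n,n]^2$ for all times. The idea is to choose $y$ so that, inside a large disc, it looks exactly like $x$ near the origin but is surrounded by a ``shield'' made of a gigantic rectangular obstacle (states from $\obstset$) that no particle can cross and behind which no information can travel.

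The key steps, in order: First I would recall from the structure of the local rule (figure~\ref{fig:defi}) and from the Lemma that a well-formed rectangular obstacle from $\obst$ is left unchanged by $\acf$ as long as its finite-type neighborhood is respected, so a valid obstacle persists forever. Second, I would observe that the particle component only ever reads/writes states $U$, $D$, $0$ on cells whose $3\times 3$ (indeed $5\times 5$) neighborhood is consistent with this; a particle reaching the boundary of a large solid obstacle either is annihilated (turned into $0$) or bypasses it, so particles on one side of the obstacle never affect cells enclosed strictly inside it — information propagation is blocked by the obstacle wall. Third, given $x$ and $\delta$, pick $k$ with $2^{-k}<\delta$ and build $y$ by: (i) copying $x$ exactly on $[-k,k]^2$; (ii) placing, just outside $[-k,k]^2$, a huge annular band of obstacle cells forming the border of one big rectangle (thick enough — at least $2$ cells — so that every $3\times 3$ pattern inside it is allowed in $\obst$, and large enough to enclose $[-k,k]^2$), with $0$'s filling the interior between the copied region and the wall if needed; (iii) putting $0$'s everywhere outside the big rectangle. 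Then $\dc(x,y)<\delta$ since they agree on $[-k,k]^2$. Fourth, I would argue that in the orbit of $y$ the big rectangular wall is a valid obstacle, hence frozen for all time, and that the bounded region it encloses evolves under $\acf$ in complete isolation from the exterior: no particle crosses the wall either way (particles either bypass along the outside or get erased), and the finite-type correction rule acts only locally. Consequently $F^t(y)$ restricted to the enclosed region depends only on $y$ restricted to that region, i.e.\ only on $x$ restricted to it, so $F^t(y)$ and $F^t(x)$ agree on $[-n,n]^2$ for all $t$ (taking $k\geq n$), giving $\dc(F^t(x),F^t(y))\leq 2^{-n}<\epsilon$.

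The main obstacle I expect is the isolation claim in step four: one must check carefully, using the explicit transition rules of figure~\ref{fig:defi}, that (a) a thick solid rectangular wall together with a $0$-background inside and outside is genuinely a fixed configuration under $\acf$ locally — no forbidden $5\times 5$ pattern appears along the wall so nothing gets turned into $0$ — and (b) no transition rule lets a $U$/$D$ particle pass through or ``leak'' state information across two or more consecutive obstacle cells. The bypass rules are designed precisely so that a particle hitting an obstacle climbs over its outer face; I would need to verify from the rule table that the particle never penetrates the obstacle and that cells strictly interior to the wall never have a neighbor outside the wall, so their update is unaffected by the exterior. Modulo this careful but routine case analysis on the finite rule set, the argument is complete. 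A small side point is to make sure $x$ itself need not be in any nice set — it is arbitrary — which is fine, since we only copy a finite patch of it and the enclosed dynamics is then well-defined regardless of whether that patch is ``legal''; the correction rule simply cleans up any locally forbidden obstacle patterns inside, but this cleanup is identical in $x$ and $y$ because it is local and both agree on the enclosed region (and its interior neighborhood).
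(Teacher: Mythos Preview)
Your proposal has a genuine logical gap at the very first step: what you state as ``the negation of sensitivity'' is not. Sensitivity says $\exists\epsilon\ \forall x\ \forall\delta\ \exists y\ \exists n$ with $\dc(x,y)<\delta$ and $\dc(F^n x,F^n y)>\epsilon$; its negation is $\forall\epsilon\ \exists x\ \exists\delta\ \forall y\ \forall n\ (\dc(x,y)<\delta\Rightarrow\dc(F^n x,F^n y)\le\epsilon)$. You instead aim to show $\forall\epsilon\ \forall x\ \forall\delta\ \exists y$ with orbits staying close, which is perfectly compatible with sensitivity (both ``some nearby $y$ diverges'' and ``some nearby $y$ stays close'' can hold simultaneously). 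So even if your construction worked it would not prove the proposition.

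Moreover the construction itself fails in two places. First, obstacles in $\obst$ are \emph{solid} rectangles filled with $1$'s; an annular ``wall'' with non-$\obstset$ interior violates the finite-type constraints along its inner boundary (e.g.\ a $\oL$ or a $1$ with a $0$ on its inside), so rule~1 of $F$ erases it immediately and no shield survives. Second, even if a persistent wall existed in $y$, your step four concludes that $F^t(y)$ and $F^t(x)$ agree on $[-n,n]^2$ for all $t$; but $x$ has no wall, so $F^t(x)$ restricted to $[-n,n]^2$ depends on $x$ outside any bounded window once $t$ is large --- the isolation argument applies only to $y$, not to $x$. The paper's proof avoids all of this by exploiting the correct quantifier structure: for each $\epsilon$ it simply \emph{chooses} $x=c_\epsilon$ to be a single solid obstacle centred at the origin; then every $y$ with $\dc(y,c_\epsilon)\le\epsilon/4$ shares that obstacle, the obstacle is inalterable in both orbits, and hence the central window is frozen in both $F^t(c_\epsilon)$ and $F^t(y)$ forever.
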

\begin{proof}
  Let ${\epsilon>0}$. Let $c_\epsilon$ be the configuration everywhere
  equal to $0$ except in the square region of side
  ${2\bigl\lceil-\log\epsilon\bigr\rceil}$ around the centre where
  there is an obstacle.  ${\forall y\in\azz}$, if
  ${d(y,c_\epsilon)\leq\epsilon/4}$ then ${\forall t\geq 0}$,
  ${d\bigl(\acf^t(c_\epsilon),\acf^t(y)\bigr)\leq\epsilon}$ since a
  well-formed obstacle (precisely, a partial configuration that would
  form a valid obstacle when completed by $0$ everywhere) is
  inalterable for $\acf$ provided it is surrounded by states in
  $\A\setminus\obstset$ (see the 3 first transition rules of case 2 in
  the definition of the local rule): this is guarantied for $y$ by the
  condition ${d(y,c_\epsilon)\leq\epsilon/4}$.\qed
\end{proof}

The next lemma shows that $\obst$ attracts any finite configuration
under the action of $\acf$.

\begin{lemma}
  \label{lem:finiteattrak}
  For any finite configuration $x$, there exists $t_0$ such that
  ${\forall t\geq t_0}$ : ${\acf^t(x)\in\obst}$.
\end{lemma}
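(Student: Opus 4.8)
The plan is to exploit two monotonicity facts about $\acf$: obstacle‑component states can never be \emph{created}, only destroyed; and as long as the current configuration is not in $\obst$, at least one obstacle‑component state \emph{is} destroyed at the next step. Since a finite configuration carries only finitely many obstacle‑component states, this forces entry into $\obst$ in bounded time, after which a short invariance argument keeps us there.

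First I would check, from the definition of the local rule, that no transition produces a state in $\obstset$ unless the center cell is already in $\obstset$: the particle rules of case~2 and case~3 output states in $\{0,U,D\}=\A\setminus\obstset$, case~1 outputs $0$, and the only rules whose value lies in $\obstset$ are the first three of case~2, which fix the center and require it to be in $\obstset$. Writing $S_t=\{z:\acf^t(x)(z)\in\obstset\}$, this yields $S_0\supseteq S_1\supseteq S_2\supseteq\cdots$, and $S_0$ is finite since $S_0\subseteq\{z:x(z)\ne 0\}$.

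Next I would show that $\acf^t(x)\notin\obst$ implies $S_{t+1}\subsetneq S_t$. If $\acf^t(x)\notin\obst$ there is a forbidden $3\times 3$ window, say centered at $w$. Every $3\times 3$ pattern over $\A\setminus\obstset$ is allowed in $\obst$, so this window contains a cell $z_0$ with $\acf^t(x)(z_0)\in\obstset$ and $\|z_0-w\|_\infty\le 1$; then the $5\times 5$ window of $\acf^t(x)$ centered at $z_0$ contains the forbidden window at $w$, hence is itself a pattern forbidden in $\obst$, and case~1 of the local rule turns $z_0$ into $0$, i.e. $z_0\in S_t\setminus S_{t+1}$. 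So as long as we stay outside $\obst$ the finite set $S_t$ strictly decreases, and therefore there is some $t_0\le\card(S_0)$ with $\acf^{t_0}(x)\in\obst$.

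Finally I would prove that $\obst$ is invariant under $\acf$, which is essentially the ``inalterability'' observation already used in the proof of Proposition~\ref{prop:nosens}. In a configuration of $\obst$ the obstacle‑component cells form well‑formed rectangles surrounded by states of $\A\setminus\obstset$, so each is matched by one of the first three rules of case~2 and left unchanged, while case~1 cannot fire (no forbidden window is present) and no other cell becomes an obstacle state; thus $S_{t+1}=S_t$ with the obstacle states unchanged, and since a forbidden $3\times 3$ pattern must involve an obstacle state, $\acf^{t+1}(x)$ is again in $\obst$. Combining the last two steps gives $\acf^t(x)\in\obst$ for all $t\ge t_0$. The only delicate point is the case analysis on the local rule that underlies the two monotonicity claims and the invariance of $\obst$; the rest is bookkeeping.
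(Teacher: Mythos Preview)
Your argument is correct and is actually cleaner than the paper's. The paper proceeds differently: it first tracks the $U/D$ particles, arguing that they can only move left or make boundedly many vertical moves (bounded by the number of obstacle cells), so that after some time no particle neighbours an obstacle cell; only then does it invoke case~1 of the local rule to conclude that the remaining obstacle region eventually satisfies the finite-type constraints, and finally notes that $\obst$ is $\acf$-invariant.

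Your route bypasses the particle analysis entirely. The key observation you exploit --- that every forbidden $3\times 3$ window must contain a cell of $\obstset$ (because the wildcard $\ast$ makes \emph{all} patterns over $\A\setminus\obstset$ admissible), and that this cell therefore sees a forbidden $5\times 5$ window and is erased by case~1 --- gives strict decrease of $S_t$ whenever $\acf^t(x)\notin\obst$, hence the explicit bound $t_0\le\card(S_0)$. Your invariance step is also tighter than stated: since admissibility of a $3\times 3$ pattern depends only on the positions and values of its $\obstset$-cells, and these are frozen once in $\obst$, invariance follows without appealing to the ``rectangle'' picture (which, strictly speaking, is only asserted for finite configurations in the paper's Lemma~1).

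One remark: the paper later refers back to its proof of this lemma (in the proof of Proposition~\ref{prop:noequ}) to extract information about what happens at a specific cell after stabilisation. Your monotonicity argument does not directly supply that information, so if you replace the paper's proof by yours you should check that the later reference can be justified independently (it can, from invariance of $\obst$ and the particle rules, but it is no longer a by-product of the lemma's proof).
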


The following lemma establishes the key property of the dynamics of
$\acf$: particles can reach any free position inside a finite field of
obstacles from arbitrarily far away from the field.

\begin{lemma}
  \label{lem:partattack}
  Let ${x\in\obst\cap\bigl(\{0\}\cup\obstset\bigr)^\zz}$ be a finite
  configuration. For any $z_0\in\zz$ such that $x(z_0)=0$ there exists
  a path ${(z_n)}$ such that:
  \begin{enumerate}
  \item ${\|z_n\|_\infty\rightarrow\infty}$
  \item ${\exists n_0,\forall n\geq n_0}$, if $x_n$ is the
    configuration obtained from $x$ by adding a particle at position
    $z_n$ (precisely, ${x_n(z_n)=U}$ and
    ${x_n\bigl(z_n+(0,-1)\bigr)=D}$) then
    ${\bigl(\acf^n(x_n)\bigr)(z_0)\in\{U,D\}}$.
  \end{enumerate}
\end{lemma}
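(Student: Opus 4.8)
The goal is to show that a particle, launched from an appropriate far-away position, can be routed to any prescribed free cell $z_0$ inside a finite configuration of rectangular obstacles. The strategy is to run the particle dynamics \emph{backwards in a combinatorial sense}: rather than trying to describe forward trajectories and hope one hits $z_0$, I would exhibit an explicit path $(z_n)$ terminating at $z_0$ and argue that a particle placed far enough along this path follows it. Concretely, I would first establish a \emph{one-step reversibility / trackability} statement for the particle component: if at time $t$ a particle (a $U$ over a $D$, or one of the split configurations $U$-above-obstacle / $D$-below-obstacle passing an obstacle corner) occupies a known cell in a valid obstacle configuration from $\obst$, then its position at time $t+1$ is completely determined by the local obstacle pattern, and conversely each such position has a well-defined predecessor cell. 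This is just a finite case check against the transition rules of figure~\ref{fig:defi}, using Lemma~1 to guarantee that the $5\times5$ neighborhood always contains at most one obstacle in a clean rectangular form, so the split/recompose rules apply unambiguously.

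\medskip

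\noindent Next I would describe the \emph{itinerary} of a particle in $x$. In a background of $0$'s with no obstacle in sight, the particle simply translates by $(-1,0)$ each step. When it reaches the right edge of a rectangular obstacle of height $h$, it splits: the $U$ part climbs over the top and the $D$ part slides under the bottom, the two halves traveling leftwards along the obstacle's horizontal boundary, and they recompose into a left-moving particle at the cell immediately to the left of the obstacle, at the same height where they entered. (This is exactly the behaviour sketched in figure~\ref{fig:partdyn}, and it is what the middle block of rules in figure~\ref{fig:defi} encodes.) Since $x$ is finite, only finitely many obstacles lie in any horizontal strip, so a particle launched at height $y_0$ far to the right will, after finitely many split/recompose episodes, arrive at the column of $z_0$ \emph{at height $y_0$}. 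To hit $z_0$ exactly I would, for each target $z_0$, choose the launch position on the horizontal line through $z_0$: then the recomposition height is preserved throughout (obstacles only shift the particle horizontally, never vertically, once it recomposes), so the particle passes through $z_0$ itself, and at that moment its state is in $\{U,D\}$ — either the recomposed particle, or one of the halves if $z_0$ happens to sit on an obstacle boundary cell. Define $z_n$ to be the cell occupied by a particle launched $n$ steps earlier from that line; then $\|z_n\|_\infty\to\infty$ because launching earlier means launching farther right, giving condition (1), and condition (2) holds with $n_0$ large enough that the launch cell is to the right of every obstacle (so that the prescribed initial patch $x_n(z_n)=U$, $x_n(z_n+(0,-1))=D$ is a legal isolated particle in a $0$-background).

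\medskip

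\noindent The one genuinely delicate point — the main obstacle — is bookkeeping the \emph{timing} so that the path $(z_n)$ is well-defined as a single path rather than a family of unrelated trajectories: I need that the cell reached at step $n$ by a particle launched at distance $n$ (suitably offset) is a consistent sequence, i.e. that launching one step earlier and running one step longer lands on a neighbor of the previous endpoint. This follows from the shift-commutation of $\acf$ together with the trackability statement above: the trajectory of a particle is, up to the finitely many obstacles, a rigid leftward drift, so the map "launch offset $\mapsto$ position after that many steps" is eventually just a translation, and the reverse-time path it traces out is genuinely a path. The split/recompose episodes only insert finitely many "detour" cells (the over-the-top and under-the-bottom cells), which one includes in the enumeration of $(z_n)$ by letting $n$ index \emph{time elapsed} rather than distance; the height-preservation of recomposition guarantees these detours always return to the line through $z_0$, so the construction closes up. Everything else is a mechanical verification that the listed transition rules indeed produce the split-bypass-recompose behaviour on configurations from $\obst$, for which Lemma~\ref{lem:finiteattrak} and Lemma~1 supply the needed structural control on obstacles.
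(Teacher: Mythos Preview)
Your overall strategy---run the particle dynamics backward from $z_0$ to build the path $(z_n)$---is exactly the paper's. The gap is your assertion that ``the recomposition height is preserved throughout (obstacles only shift the particle horizontally, never vertically, once it recomposes).'' That is false for $\acf$. When a $U/D$ pair reaches a rectangular obstacle, $U$ climbs the right edge, crosses the top, and descends the left edge while $D$ takes the mirror route underneath; both advance one cell per step, so together they sweep the entire perimeter and meet on the left side after each has covered exactly half of it. The recomposition height is therefore the \emph{reflection} of the entry height about the obstacle's vertical midline, not the entry height itself. A particle launched on the horizontal line through $z_0$ will drift off that line at the first obstacle not centred on it, and need not ever pass through $z_0$.

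The paper's construction addresses precisely this point. When the backward path from $z_n$ meets an obstacle with upper-left, upper-right and lower-right outside corners $a,b,c$, it routes the path over the top and fixes the landing point $z_{n+p+1}$ on the segment $[b;c]$ by the condition ${z_na + bz_{n+p+1} = bc}$. That arithmetic identity is exactly the reflection relation described above, so the explicit path tracks the genuine reverse trajectory of the $U$ component through every bypass. Your sketch would need this height correction at each obstacle (equivalently, a global computation of the correct launch row from the full sequence of obstacles between it and $z_0$); without it the argument does not close.
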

\begin{proof}
  First, since $x\in\obst$ and $x(z_0)=0$, then either
  ${x\bigl(z_0+(0,1)\bigr)=0}$ or ${x\bigl(z_0+(0,-1)\bigr)=0}$. We
  will consider only the first case since the proof for the second one
  is similar. Let $(z_n)$ be the path starting from $z_0$ defined as
  follows:
  \begin{itemize}
  \item If $x\bigl(z_n+(1,0)\bigr)=0$ and $x\bigl(z_n+(1,-1)\bigr)=0$ then ${z_{n+1}=z_n+(1,0)}$.
  \item Else, position $z_n+(1,0)$ and/or position $z_n+(1,-1)$
    belongs to an obstacle $P$.  Let $a$, $b$ and $c$ be the positions
    of the upper-left, upper-right and lower-right outside corners of
    $P$ and let $p$ be its half perimeter. Then define
    ${z_{n+1},\ldots,z_{n+p+1}}$ to be the sequence of positions made
    of (see figure~\ref{fig:thepath}):

    \begin{figure}
      \centering
      \includegraphics[scale=.5]{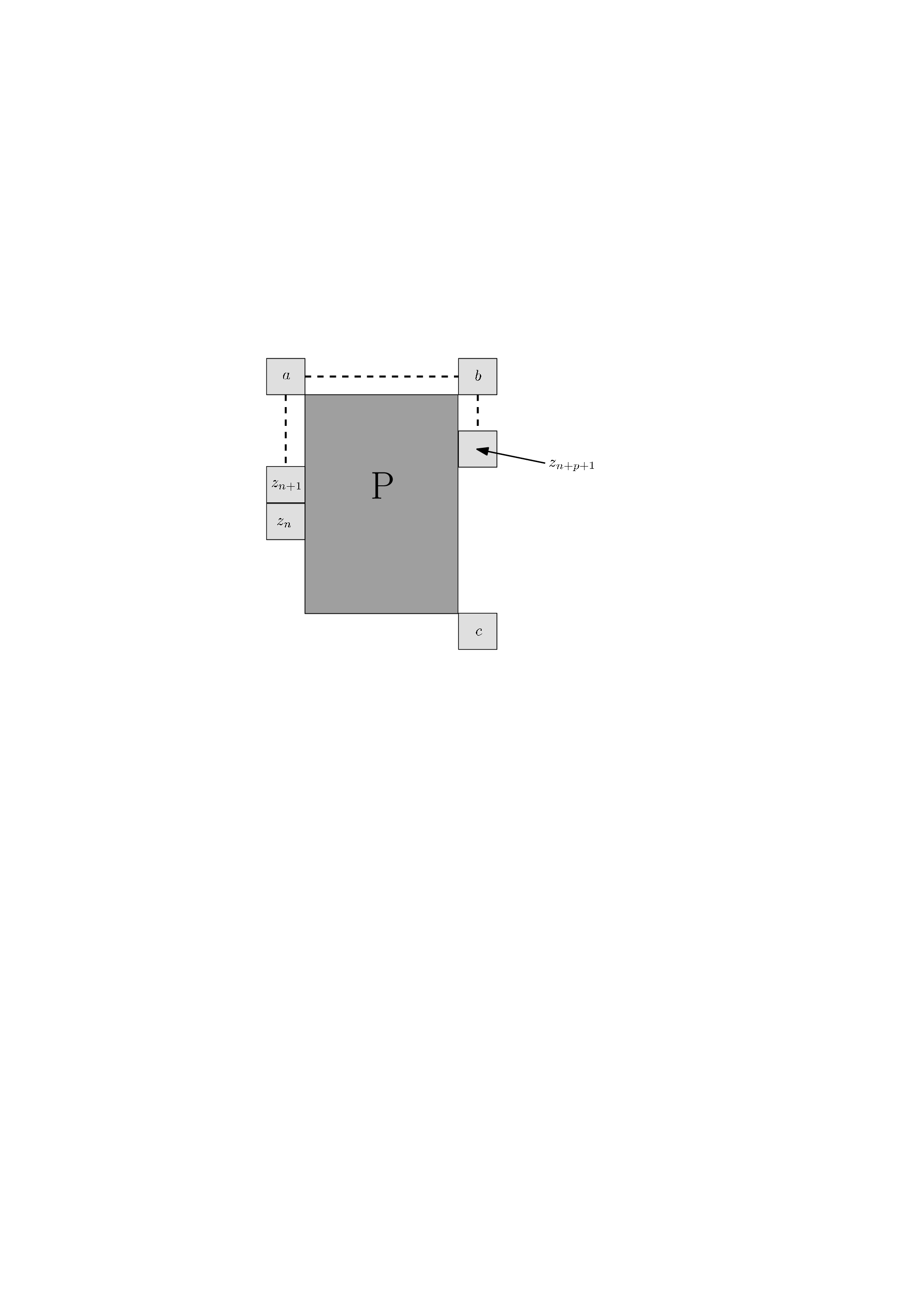}
      \caption{Definition of the $(z_n)$ path in the presence of obstacles.}
      \label{fig:thepath}
    \end{figure}
    \begin{itemize}
    \item a (possibly empty) vertical segment from $z_n$ to $a$,
    \item the segment $[a;b]$,
    \item a (possibly empty) vertical segment from $b$ to $z_{n+p+1}$
      where $z_{n+p+1}$ is the point on $[b;c]$ such that ${z_na +
        bz_{n+p+1}=bc}$.
    \end{itemize}
  \end{itemize}
  We claim that the path $(z_n)$ constructed above has the properties
  of the lemma. Indeed, one can check that for each case of the
  inductive construction of a point $z_m$ from a point $z_n$ we have:
  \begin{itemize}
  \item ${\|z_m\|_\infty>\|z_n\|_\infty}$,
  \item ${\bigl(\acf^{m-n}(x_m)\bigr)(z_n)=U}$ and
    ${\bigl(\acf^{m-n}(x_m)\bigr)(z_n+(0,-1))=D}$ (straightforward from
    the definition of $\acf$).\qed
  \end{itemize}
\end{proof}

\begin{proposition}
  \label{prop:noequ}
  $\acf$ has no equicontinuous points.
\end{proposition}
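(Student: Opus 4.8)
The plan is to show that no configuration $x \in \azz$ can be an equicontinuous point of $\acf$, by exhibiting, for a suitable fixed $\epsilon$, an arbitrarily small perturbation of $x$ whose orbit eventually separates from the orbit of $x$ at the origin. I would take $\epsilon = 1$ (or any value forcing agreement on the central cell), so that equicontinuity at $x$ would require: for some $\delta > 0$, every $y$ with $\dc(x,y) < \delta$ satisfies $(\acf^n y)(0) = (\acf^n x)(0)$ for all $n$. Writing $\delta = 2^{-N}$, this means agreeing with $x$ on the box of radius $N$ forces perpetual agreement with the orbit of $x$ at the origin. I will contradict this for every $N$.

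First I would reduce to a tractable situation. Given $N$, consider the configuration $x'$ that agrees with $x$ on the box $B_N$ of radius $N$ and is $0$ everywhere outside; this $x'$ is finite and satisfies $\dc(x,x') < 2^{-N}$, so $x$ and $x'$ must have the same orbit at the origin under the equicontinuity hypothesis — hence it suffices to break equicontinuity relative to $x'$. Now apply Lemma~\ref{lem:finiteattrak}: there is $t_0$ with $\acf^{t_0}(x') \in \obst$. Moreover, after this transient the obstacle component is a finite union of rectangular obstacles (by the structure lemma) and any particle components present have either been absorbed or organized; in fact, by iterating the rule a little more I can assume $\acf^{t_0}(x')$ has no particle states near the origin, so its central cell is $0$ and its restriction to a large neighborhood of the origin lies in $\bigl(\{0\}\cup\obstset\bigr)$. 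Write $w = \acf^{t_0}(x')$; it is a finite configuration in $\obst \cap \bigl(\{0\}\cup\obstset\bigr)^\zz$ with $w(0) = 0$.

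Next I invoke Lemma~\ref{lem:partattack} with this $w$ and $z_0 = 0$: there is a path $(z_n)$ with $\|z_n\|_\infty \to \infty$ and an index $n_0$ so that for any $n \ge n_0$, adding a particle at $z_n$ to $w$ (putting $U$ at $z_n$, $D$ below) and running $\acf$ for $n$ steps brings a particle state to the origin. Choose $n$ large enough that $z_n$ lies strictly outside the box $B_{N + r \cdot t_0}$ where $r = r(\acf) = 2$; then the perturbation of $w$ consisting in adding the particle at $z_n$ corresponds — pulling back through the $t_0$ applications of $\acf$, which have finite radius — to a perturbation of $x'$ (and hence of $x$) supported outside $B_N$. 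Concretely, define $y$ to agree with $x'$ outside a neighborhood of $z_n$ and to carry a particle near $z_n$; then $\dc(x,y) < 2^{-N} = \delta$, yet $\acf^{t_0}(y)$ equals $w$ plus a particle at $z_n$ (obstacle corrections and the locality of $\acf$ guarantee the far-away particle does not interfere with the already-stabilized central region during the first $t_0$ steps), so after $t_0 + n$ further steps the origin holds a state in $\{U,D\}$, whereas $\acf^{t_0+n}(x)$ has origin value $(\acf^{t_0+n}x')(0)$. By making $n$ larger still if necessary I can ensure these two values differ (the orbit of the finite configuration $x'$ at the origin eventually settles to $0$ by Lemma~\ref{lem:finiteattrak} combined with the particle dynamics, so for large $n$ it is $0 \ne U$). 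This contradicts equicontinuity at $x$, and since $x$ was arbitrary, $\acf$ has no equicontinuous points. \qed

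The main obstacle I anticipate is the bookkeeping in the pull-back step: one must verify that inserting a particle far from the origin in $w = \acf^{t_0}(x')$ is realized by a genuinely $B_N$-avoiding perturbation of $x'$, which requires that the forbidden-pattern corrections in rule~1 and the particle rules in rule~2 do not let the distant perturbation propagate inward faster than radius $r$ per step during the transient, and that the particle created near $z_n$ in $\acf^{t_0}(y)$ is exactly a clean left-moving particle on a $0$/obstacle background as Lemma~\ref{lem:partattack} requires. A clean way to sidestep delicate transient analysis is to instead perturb $w$ directly and argue that $w$ itself cannot be an equicontinuous point, then transfer back using uniform continuity of $\acf^{t_0}$ and the equicontinuity definition — but either way the crux is controlling information flow from $z_n$ to the origin during the first $t_0$ steps.
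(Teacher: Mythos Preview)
Your overall strategy matches the paper's: truncate $x$ to a finite configuration, apply Lemma~\ref{lem:finiteattrak} to reach $\obst$, then use Lemma~\ref{lem:partattack} to send a particle from far away to a target cell. The pull-back bookkeeping you worry about is also handled somewhat informally in the paper, so that is not the real issue.

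The genuine gap is your claim that $w(0)=0$. You argue that after the transient ``no particle states are near the origin, so its central cell is $0$,'' but this ignores obstacle states. If $x(0)\in\obstset$ and the origin lies inside a \emph{valid} obstacle of $x'$ (for instance, $x$ already has a well-formed rectangular obstacle covering the origin), that obstacle is stable under $\acf$ and $w(0)\in\obstset$ forever. Lemma~\ref{lem:partattack} then cannot be applied with $z_0=0$, and no further iteration helps. The paper avoids this by not insisting on $z_0=0$: it picks any $z_0$ with $x(z_0)=0$ and sets $\epsilon=2^{-\|z_0\|_\infty-1}$ accordingly. A separate short case handles configurations with $x(z)\in\obstset$ for all $z$: either $x$ is the all-$1$ configuration (not equicontinuous), or $x$ contains a forbidden pattern of $\obst$, so one step of $\acf$ produces a $0$ and reduces to the main case. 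Once you incorporate this case split and let $z_0$ float, your argument goes through.
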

\begin{proof}
  Assume $\acf$ has an equicontinuous point, precisely a point $x$ which
  verifies ${\forall\epsilon>0,\exists\delta : \forall y,
    d(x,y)\leq\delta\Rightarrow \forall t,
    d\bigl(\acf^t(x),\acf^t(y)\bigr)\leq\epsilon}$.

  Suppose that there is $z_0$ such that ${x(z_0)=0}$ and let
  ${\epsilon = 2^{-\|z_0\|_\infty-1}}$. We will show that the
  hypothesis of $x$ being an equicontinuous point is violated for this
  particular choice of $\epsilon$. Consider any ${\delta>0}$ and let
  $y$ be the configuration everywhere equal to $0$ except in the
  central region of radius ${-\log \left\lceil\delta\right\rceil}$
  where it is identical to $x$. Since $y$ is finite, there exists
  $t_0$ such that ${y_+=\acf^{t_0}(y)\in\obst}$ (by
  lemma~\ref{lem:finiteattrak}).  Moreover, the proof of
  lemma~\ref{lem:finiteattrak} guaranties that for any positive
  integer $t$, ${\bigl(\acf^t(y_+)\bigr)(z_0)=x(z_0)=0}$. So we can apply
  lemma~\ref{lem:partattack} on $y_+$ and position $z_0$ to get the
  existence of a path ${(z_n)}$ allowing particles placed arbitrarily
  far away from $z_0$ to reach the position $z_0$ after a certain
  time. For any sufficiently large $n$, we can construct a
  configuration $y'$ obtained from $y$ by adding a particle at
  position $z_n$. By the property of ${(z_n)}$, we have:
  ${\bigl(\acf^n(y)\bigr)(z_0)\not=\bigl(\acf^n(y')\bigr)(z_0)}$ and
  therefore ${d\bigl(\acf^n(y),\acf^n(y')\bigr)>\epsilon}$. Since, if
  ${n>-\log \left\lceil\delta\right\rceil}$, both $y$ and $y'$ are in
  the ball of centre $x$ and radius $\delta$, we have the desired
  contradiction.

  Assume now that ${\forall z, x(z)\in\obstset}$. There must exist
  some $z_0$ such that ${x(z_0)\not=1}$ (since the uniform
  configuration everywhere equal to 1 is not an equicontinuous point).
  It follows from the definition of $\obst$ that $z_0$ belongs to a
  forbidden pattern for $\obst$. Therefore
  ${\bigl(\acf(x)\bigr)(z_0)=0}$ and we are brought back to the
  previous case of this proof.\qed
\end{proof}

\section{Undecidability of Topological Classification Revisited}
\label{sec:classif}


We will use simulations of Turing machines by tile sets in the
classical way (originally suggested by Wang~\cite{Wang}): the tiling
represents the space-time diagram of the computation and the
transition rule of the Turing machine are converted into tiling
constraints.  Without loss of generality, we only consider Turing
machines working on a semi-infinite tape with a single final state.
The $i^{th}$ machine of this kind in a standard enumeration is denoted
by $\machine{i}$.  In the sequel we use the following notations.
First, to each $\machine{i}$ we associate a tile set $\tileset{i}$
whose constraints ensure the simulation of $\machine{i}$ as mentioned
above; Second, when constructing a CA $G$, we denote by
$\obstacl{G}$ the subshift of its admissible obstacles, which plays
the same role as $\obst$ for $\acf$ with some differences detailed
below.

In \cite{varouch}, the authors give a recursive construction which
produce either a 1D sensitive CA or a 1D CA with equicontinuous points
according to whether a Turing machine halts on the empty input. By
noticing that a 1D CA is sensitive (resp. has equicontinuous points)
in the 1D topology if and only if it is sensitive (resp. has
equicontinuous points) in the 2D topology when viewed as a 2D CA
(neighbors are aligned, e.g. horizontally), we get the following
proposition.

\begin{proposition}
  \label{prop:varouch}
  There is a recursive function ${\phivarouch : \N\rightarrow CA}$
  such that ${\phivarouch(i)\in\equpt}$ if $\machine{i}$ halts on the
  empty input and ${\phivarouch(i)\in\sensi}$ otherwise.
\end{proposition}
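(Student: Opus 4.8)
The plan is to obtain $\phivarouch$ by post-composing the 1D construction of~\cite{varouch} with the trivial embedding of 1D CA into 2D CA, and then to check that this embedding preserves membership both in $\equpt$ and in $\sensi$.

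Given a 1D CA $(\az,F)$ with neighborhood $\U\subset\Z$ and local rule $f$, I define the 2D CA $(\azz,\widetilde F)$ with neighborhood $\widetilde\U=\{(u,0):u\in\U\}$ and the same local rule $f$, so that $\widetilde F(x)_{(a,b)}=f\bigl((x_{(a+u,b)})_{u\in\U}\bigr)$. The map $F\mapsto\widetilde F$ is plainly recursive, hence $\phivarouch$ --- the composition of this map with the recursive function of~\cite{varouch} --- is recursive. The structural fact to exploit is that $\widetilde F$ acts independently on each horizontal line $\Z\times\{j\}$ and acts there exactly as $F$: writing $x^{(j)}$ for the restriction of $x\in\azz$ to line $j$, one gets $\bigl(\widetilde F^{\,n}(x)\bigr)^{(j)}=F^n\bigl(x^{(j)}\bigr)$ for all $n$ and $j$. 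The only interaction between the 1D and 2D distances that is needed is that a position $(i,j)$ on line $j$ has $\ell^\infty$-norm $\max(|i|,|j|)$, which reduces to $|i|$ when $|j|$ is small.

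I would then establish $F\in\equpt\iff\widetilde F\in\equpt$ and $F\in\sensi\iff\widetilde F\in\sensi$ from elementary estimates built on two constructions. For ``$\Leftarrow$'' one restricts to line $0$: if $\widetilde x$ is an equicontinuous point of $\widetilde F$ then $x:=\widetilde x^{(0)}$ is one for $F$, since any $y$ with $\dc(x,y)<\delta$ lifts to $\widetilde y$ equal to $\widetilde x$ off line $0$ and to $y$ on line $0$, for which $\dc(\widetilde x,\widetilde y)<\delta$ and, by the line-by-line property, $\dc(\widetilde F^{\,n}(\widetilde x),\widetilde F^{\,n}(\widetilde y))=\dc(F^n(x),F^n(y))$ for all $n$; the sensitivity case is identical and shows that a sensitivity constant of $\widetilde F$ is one for $F$. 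For ``$\Rightarrow$'' one repeats a 1D configuration on every line, $\widetilde x_{(i,j)}:=x_i$: given $\epsilon=2^{-k}$ choose the 1D modulus $\delta'=2^{-m'}$ for $\epsilon$, set $\delta=2^{-\max(k,m')}$, and note that any $\widetilde y$ with $\dc(\widetilde x,\widetilde y)<\delta$ coincides with $\widetilde x$ on the ball of radius $\max(k,m')$, so that on each line $j$ with $|j|\le\max(k,m')$ its restriction $\delta'$-approximates $x$ (hence the two orbits stay $\epsilon$-close on that line) while lines with $|j|>k$ contribute only positions of norm $>k$; this gives $\dc(\widetilde F^{\,n}(\widetilde x),\widetilde F^{\,n}(\widetilde y))<\epsilon$ for all $n$. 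For sensitivity one uses the same $\widetilde x$ and, from a 2D perturbation separating the orbits at some position of norm $<k$ --- hence on some line $j_0$ with $|j_0|<k$ --- extracts the 1D perturbation $y^{(j_0)}$, which $\delta$-approximates $x$ and still separates the 1D orbits.

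I expect the only delicate part to be this last direction, ``$\widetilde F$ sensitive $\Rightarrow$ $F$ sensitive'': the perturbation supplied by sensitivity of $\widetilde F$ may sit on an arbitrary line, so one must first single out a line $j_0$ with $|j_0|<k$ on which the orbits already separate and then verify that its restriction is at once a genuine $\delta$-perturbation of $x$ and a witness of 1D separation; it is also worth recording that, since any separating witness for $\delta'<\delta$ is one for $\delta$, one may freely assume $\delta$ (equivalently $\max(k,m')$) as large as needed. Everything else is routine once the decoupling of $\widetilde F$ into copies of $F$ is in place, the remaining work being purely metric bookkeeping --- consistently reading $\dc<2^{-r}$ as coincidence on the ball of radius $r$ and tracking how the line index enters the $\ell^\infty$-norm.
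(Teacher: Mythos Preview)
Your proposal is correct and follows exactly the approach of the paper: cite the 1D construction of~\cite{varouch} and lift it to 2D via the trivial horizontal embedding, using that a 1D CA is sensitive (resp.\ has equicontinuous points) in the 1D topology if and only if its 2D lift is. The paper merely asserts this 1D/2D equivalence in one sentence before the proposition, whereas you spell out the line-by-line decoupling and the metric bookkeeping; your details are sound and constitute a fleshed-out version of the same argument.
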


However, this is not enough to establish the overall undecidability of
the topological classification of 2D CA. The main concern of this
section is to complete proposition~\ref{prop:varouch} in order to
prove a stronger and more complete undecidability result summarized in
the following theorem.

\begin{theorem}
  \label{theo:undeci}
  Each of the class $\equpt$, $\sensi$ and $\nono$ is neither r.e. nor
  co-r.e.  Moreover any pair of them is recursively inseparable.
\end{theorem}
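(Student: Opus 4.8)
The plan is to combine the two kinds of constructions available — the recursive function $\phivarouch$ of Proposition~\ref{prop:varouch}, which toggles between $\equpt$ and $\sensi$, with a new recursive construction $\phiemboit$ that toggles between (a variant of) $\acf\in\nono$ and something in $\equpt$ — and then to merge these with a careful product/direct-sum of CA so that every pairwise separation is witnessed. Concretely, I would first build, for each $i$, a 2D CA $\machine{i}$-driven version of $\acf$: replace the ``freely insertable rectangular obstacle'' of $\acf$ by an obstacle region whose admissible contents $\obstacl{\phiemboit(i)}$ encode valid space-time diagrams of $\machine{i}$ via the tile set $\tileset{i}$. If $\machine{i}$ does \emph{not} halt on empty input, arbitrarily large valid obstacles exist, the particle component behaves exactly as in Section~\ref{sec:sensitivity}, and Propositions~\ref{prop:nosens} and~\ref{prop:noequ} go through verbatim, so $\phiemboit(i)\in\nono$. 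If $\machine{i}$ \emph{does} halt, then only obstacles of bounded size are admissible (the computation reaching the final state forces a finite tiling), hence beyond a fixed radius no obstacle can block a particle; one then checks that the configuration where all cells are $0$ becomes an equicontinuous point (small perturbations either die out by Lemma~\ref{lem:finiteattrak}-type attraction to $\obstacl{}$ or consist of bounded obstacles plus particles that sweep past in bounded time), giving $\phiemboit(i)\in\equpt$.

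**Next**, I would assemble the three separations. For $(\equpt,\sensi)$ this is already Proposition~\ref{prop:varouch}: $A=\{i:\machine{i}\text{ halts}\}$ reduces to $\{i:\phivarouch(i)\in\equpt\}$ and its complement to $\{i:\phivarouch(i)\in\sensi\}$, and since $A$ is recursively inseparable from its complement (indeed $A$ is r.e.\ but not co-r.e., so $\equpt$ cannot be r.e.\ via this reduction and $\sensi$ cannot be co-r.e.), we get that $\equpt$ is not co-r.e.\ and $\sensi$ is not r.e. To get the reverse bounds — $\equpt$ not r.e., $\sensi$ not co-r.e.\ — and to bring $\nono$ into the picture, I would use $\phiemboit$ together with a ``disjoint-union'' operation $G\oplus H$ on CA acting on $\A_G\times\A_H$-labelled half-planes, engineered so that $G\oplus H$ is sensitive iff $G$ or $H$ is sensitive, has an equicontinuous point iff both $G$ and $H$ do, and lies in $\nono$ iff one factor is in $\nono$ and no factor is sensitive. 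Feeding in $\phivarouch(i)$ and $\phiemboit(j)$ for independent machines, every class membership becomes a Boolean combination of two independent halting facts, and standard recursion-theoretic bookkeeping (the pair of disjoint r.e.-complete sets embeds recursively) yields: each of $\equpt,\sensi,\nono$ is neither r.e.\ nor co-r.e., and each pair of them is recursively inseparable.

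**The main obstacle** I anticipate is twofold. The combinatorial heart is verifying that the $\machine{i}$-obstacle variant of $\acf$ still has the clean ``particle bypasses any obstacle'' dynamics of Lemma~\ref{lem:partattack}: the obstacles are now required to be \emph{rectangular} computation zones, so I must ensure the local rule both (i) erases any obstacle that fails the tiling constraints of $\tileset{i}$ (so that Lemma~\ref{lem:finiteattrak} survives) and (ii) keeps obstacles rectangular with enough spacing that the path construction of Lemma~\ref{lem:partattack} is unaffected — this is a matter of enlarging $\obst$'s allowed-pattern set by the $\tileset{i}$ dominos and re-checking the finitely many transition cases. The subtler point is the halting case: I need that halting genuinely \emph{bounds} obstacle size (this is where ``single final state, semi-infinite tape'' is used, as in the classical Wang encoding) and that the resulting dynamics has $0^{\zz}$ as a true equicontinuous point rather than merely a non-sensitive CA — which requires the bound on obstacle size to be effective enough that a single $\delta$ controls all future times. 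The remaining work, the properties of $\oplus$ and the purely recursion-theoretic inseparability argument, I expect to be routine.
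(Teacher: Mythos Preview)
Your proposal has a genuine gap at its core: the claim that when $\machine{i}$ halts (so that admissible obstacles have bounded size), the configuration $0^{\zz}$ is an equicontinuous point of your $\phiemboit(i)$ is false. With the particle dynamics of Section~\ref{sec:sensitivity}, a $U/D$ particle placed at $(n,0)$ in an all-zero background simply moves left and reaches the origin at time $n$. So for any $\delta>0$ you can perturb $0^{\zz}$ within $\delta$ by adding a particle at abscissa $n>\lceil -\log\delta\rceil$, and at time $n$ the two orbits differ at the origin. The phrase ``particles that sweep past in bounded time'' does not help: equicontinuity requires the orbits to stay $\epsilon$-close \emph{for all} $t$, and the single time step at which the particle crosses the origin already violates this. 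In fact, bounding the obstacle size is exactly what makes the CA \emph{sensitive}: this is precisely the content of the paper's Proposition~\ref{prop:classdir}, where halting forces bounded obstacles and yields $\sensi$, not $\equpt$.

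This is why the paper takes a different route for the $\equpt/\nono$ separation. To obtain equicontinuous points in a variant of $\acf$, one cannot rely on bounded obstacles in a sea of $0$'s; one must instead produce configurations that are \emph{everywhere} obstacle, so that no particle can ever reach the centre. The paper does this with a second kind of obstacle (``onion-skin'' $T$-obstacles parameterised by a tile set $T$), engineered so that a configuration entirely made of such states is valid---and hence equicontinuous---iff $T$ tiles the plane. This reduces $\equpt$ vs.\ $\nono$ to Berger's tiling problem rather than to halting. Together with $\phivarouch$ (for $\equpt/\sensi$) and two further halting-based constructions $\phiclassdir$, $\phiclassinv$ (for $\sensi/\nono$ in both directions), the paper obtains all the required $\Sigma^0_1$- and $\Pi^0_1$-hardness results directly, without needing your $\oplus$ combination. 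Your $\oplus$ machinery could in principle be salvaged, but only once you supply a correct $\equpt/\nono$ toggle---and that is exactly the missing idea.
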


The proof of this theorem rely on different variants of the
construction of the automaton $\acf$ above. Each time, the construction
scheme is the same, and the desired property is obtained by adding
various contents inside obstacles and slightly changing the rules of
destruction of obstacles according to that content.

The next proposition can be established by using a mechanism to bound
or not the size of admissible obstacles according to whether a Turing
machine halts or not. The idea is to force the tiling representation
of a computation on a blank tape in each obstacle (using the lower
left corner) and to forbid the final state. The proof mechanism used
for $\acf$ can be applied if there is no bound on admissible
obstacles. Otherwise, we get a sensitive CA.

\begin{proposition}
  \label{prop:classdir}
  There is a recursive function ${\phiclassdir : \N\rightarrow CA}$
  such that ${\phiclassdir(i)\in\sensi}$ if $\machine{i}$ halts on the
  empty input and ${\phiclassdir(i)\in\nono}$ otherwise.
\end{proposition}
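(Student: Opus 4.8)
The plan is to modify the automaton $\acf$ so that the dynamics inside each obstacle encodes a computation of $\machine{i}$ on blank input, and the existence of an equicontinuous-style obstruction depends on whether that computation halts. Concretely, I would replace the uniform interior state $1$ of rectangular obstacles by tiles from $\tileset{i}$, using the space-time-diagram convention of Wang/Berger: the lower-left corner of the rectangle carries the initial configuration of $\machine{i}$ on the blank tape, each row above represents one further step, and the tiling constraints forbid the final state of $\machine{i}$. The subshift of finite type $\obstacl{G}$ of admissible obstacles is then defined by these local constraints together with the rectangle-shape constraints inherited from $\obst$; a rectangle of height $h$ and suitable width is admissible exactly when $\machine{i}$ runs for at least $h$ steps without halting. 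The destruction rule of case~1 (turn into $0$ when a forbidden pattern is seen) is kept verbatim, so that ill-formed or ``too tall'' obstacles are erased locally, and the particle component $\{U,D,0\}$ together with its transition rules in figure~\ref{fig:defi} is left essentially unchanged.

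Next I would analyze the two cases. If $\machine{i}$ does \emph{not} halt on the blank input, then arbitrarily large admissible obstacles exist, so the proof of proposition~\ref{prop:nosens} carries over verbatim (for every $\epsilon$ one builds a centred obstacle of side $2\lceil-\log\epsilon\rceil$ that is inalterable once surrounded by non-obstacle states), giving non-sensitivity; and the proof of proposition~\ref{prop:noequ} also carries over, since lemmas~\ref{lem:finiteattrak} and~\ref{lem:partattack} only use the particle dynamics and the rectangular shape of obstacles, both untouched. Hence $\phiclassdir(i)\in\nono$. If $\machine{i}$ \emph{does} halt, there is an absolute bound $T$ on the height of any admissible obstacle, hence a bound on the size of any obstacle region, and I would argue that $\phiclassdir(i)$ is then sensitive: a perturbation of an arbitrary configuration $x$ inside a large central window can create, or move, a particle that eventually travels far and reaches a fixed position near the origin — obstacles, being of bounded size, can no longer shield an arbitrarily large neighbourhood of any point. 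This is essentially the sensitivity argument one expects for the particle component in the absence of arbitrarily large obstacles; one fixes $\epsilon$ corresponding to a small central region, and for any $\delta$ and any $x$ one perturbs $x$ outside the $\delta$-ball to inject a left-moving particle on a clear horizontal track, which after finitely many steps overwrites a cell inside the central region, changing $d(\acf^n(x),\acf^n(y))$ above $\epsilon$.

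The main obstacle, and the step deserving the most care, is the sensitivity direction when $\machine{i}$ halts. One must check that from \emph{any} configuration $x$ (not just finite ones, and in particular configurations containing large but bounded obstacle fields, partial obstacles, or walls of obstacle states) a small far-away perturbation can be propagated to a fixed central region. Partial or invalid obstacle patterns get cleaned to $0$ by case~1 within bounded time, but one has to ensure this cleaning does not itself mask the perturbation, and that after the transient one really has enough $0$-background along some horizontal line from the perturbation site to the origin for a particle to traverse, using the bypass rules to get around the finitely many bounded obstacles in the way — this is exactly the content of lemma~\ref{lem:partattack} run ``in reverse'', launching a particle from far away rather than receiving one. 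I would state and prove a sensitivity lemma: \emph{if the size of admissible obstacles of $G$ is bounded, then $G$ is sensitive}, with sensitivity constant depending only on the radius of $G$, and then apply it. The recursiveness of $\phiclassdir$ is immediate since $\tileset{i}$ is obtained effectively from $\machine{i}$ and the rest of the construction is a fixed finite modification of $\acf$.
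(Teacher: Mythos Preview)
Your construction and overall strategy match the paper's proof essentially verbatim: replace the interior state $1$ by the tiles of $\tileset{i}$, anchor the initial configuration at the lower-left corner via the $\oBL$/$\oB$ constraints, forbid the halting state, and otherwise keep the particle dynamics and the destruction rule of $\acf$ untouched. The paper's analysis is organised slightly differently from yours: rather than splitting into two cases and proving non-sensitivity plus no-equicontinuity in one and sensitivity in the other, it first observes that the proof of proposition~\ref{prop:noequ} goes through \emph{unconditionally} (the obstacle field still projects onto $\obst$, so lemmas~\ref{lem:finiteattrak} and~\ref{lem:partattack} apply regardless of whether $\machine{i}$ halts), hence $\phiclassdir(i)$ is always in $\sensi\cup\nono$; it then claims, invoking the proof of proposition~\ref{prop:nosens}, that sensitivity holds exactly when admissible obstacles are size-bounded. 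Your case split reaches the same conclusion, and in the halting case sensitivity subsumes the absence of equicontinuous points anyway, so nothing is lost.

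You are right to flag the sensitivity direction as the step requiring care: the paper is terse there, asserting that ``sensitive or not depends only on the existence of arbitrarily large admissible obstacles'' and leaving the bounded-obstacle $\Rightarrow$ sensitive implication to the reader. Your proposed sensitivity lemma (bounded admissible obstacles imply sensitivity, via cleaning of invalid obstacle patterns in bounded time followed by launching a particle along a track that can only meet finitely many bounded obstacles, each bypassed as in lemma~\ref{lem:partattack}) is exactly the argument needed to make this rigorous, and is in fact more explicit than what the paper provides.
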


Before going on with the different constructions needed to prove
theorem~\ref{theo:undeci}, let us stress the dynamical consequence of
the construction of proposition~\ref{prop:classdir}.  It is well-known
that for any 1D sensitive CA of radius $r$, ${2^{-2r}}$ is always the
maximal admissible sensitivity constant (see for
instance~\cite{Kurka97}). Thanks to the above construction it is easy
to construct CA with tiny sensitivity constants as shown by the
following proposition.

\begin{proposition}
  \label{prop:constant}
  The (maximal admissible) sensitivity constant of sensitive 2D CA
  cannot be recursively (lower-)bounded in the number of states and
  the neighborhood size.
\end{proposition}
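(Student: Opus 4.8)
The plan is to derive proposition~\ref{prop:constant} as a direct corollary of proposition~\ref{prop:classdir}, by a diagonalization/reduction argument. Suppose, for contradiction, that there were a recursive function $\beta$ assigning to each CA $G$ (presented by its number of states and neighborhood) a positive rational $\beta(G)$ with the property that whenever $G$ is sensitive, some admissible sensitivity constant of $G$ is at least $\beta(G)$. I would then show this gives a decision procedure for the halting problem, contradicting its undecidability.

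First I would fix the recursive family $\{\phiclassdir(i)\}_{i\in\N}$ from proposition~\ref{prop:classdir}: if $\machine{i}$ halts on empty input then $\phiclassdir(i)\in\sensi$, and otherwise $\phiclassdir(i)\in\nono$, hence in particular $\phiclassdir(i)$ is not sensitive. The key quantitative point is that the construction underlying proposition~\ref{prop:classdir} produces, in the halting case, a CA whose \emph{only} witnesses to sensitivity are perturbations localized near an obstacle, and the maximal admissible sensitivity constant is governed by (roughly $2^{-k}$ where $k$ is) the bound on the size of admissible obstacles. That bound is exactly the halting time of $\machine{i}$, which is not recursively bounded in $i$. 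So I would argue: given $i$, compute $G_i=\phiclassdir(i)$ and its parameters, compute $\beta(G_i)$, and run the following semi-decision in parallel with a search for a sensitivity witness of magnitude $\ge\beta(G_i)$: if such a witness is ever found, $G_i$ is sensitive, so $\machine{i}$ halts; simultaneously, since (in the non-halting case) $G_i\in\nono$ is provably non-sensitive, one can also semi-decide non-sensitivity — but the cleanest formulation is simply that $\beta$ plus the specific construction would let us decide membership of $\phiclassdir(i)$ in $\sensi$, which by proposition~\ref{prop:classdir} decides halting.

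More carefully, the reduction I would write down is: the maximal admissible sensitivity constant of $\phiclassdir(i)$, \emph{when it is sensitive}, tends to $0$ as the halting time of $\machine{i}$ grows, while its state count and neighborhood size stay fixed (all the $\phiclassdir(i)$ share the same bounded format — only the tile set $\tileset{i}$ encoding the transitions of $\machine{i}$ varies, and even that can be absorbed into a fixed-radius CA by the standard Wang-tile simulation). Hence a recursive lower bound $\beta$ depending only on those two bounded parameters would be a \emph{fixed} positive constant $\beta_0$ across the whole family; but then "$\machine{i}$ halts with running time $T_i$ satisfying $2^{-cT_i}\ge\beta_0$" would be the decidable condition defining sensitivity of $\phiclassdir(i)$, bounding $T_i$ and thus deciding halting — contradiction.

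The main obstacle I anticipate is making precise the quantitative claim that, in the sensitive case, the maximal admissible sensitivity constant of $\phiclassdir(i)$ is genuinely small — comparable to $2^{-(\text{obstacle size bound})}$ — rather than merely "there exists some small sensitivity constant". One must check that \emph{every} admissible sensitivity constant of that CA is small, i.e. that large-magnitude perturbations (those visible within a small radius of the origin) cannot be amplified. This requires inspecting the construction of $\phiclassdir(i)$: the particle/obstacle mechanism only propagates discrepancies through moving particles, and when obstacle sizes are bounded by $T_i$, a particle launched far enough away is needed to disturb a given cell, so the effective sensitivity radius is forced to be on the order of $T_i$, pinning the constant at roughly $2^{-cT_i}$. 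Granting this (which follows from the same analysis as lemma~\ref{lem:partattack} and proposition~\ref{prop:noequ} adapted to the bounded-obstacle setting), the recursive-inseparability-style contradiction with the halting problem is routine.
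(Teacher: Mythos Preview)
Your overall strategy is the same as the paper's: both exploit the family $\phiclassdir(i)$ from proposition~\ref{prop:classdir} and the fact that, when $\machine{i}$ halts, the maximal admissible sensitivity constant of $\phiclassdir(i)$ is determined by the maximal admissible obstacle size, which in turn is controlled by the halting time of $\machine{i}$. The paper states this directly (the constant is $2^{-l/2+1}$ where $l$ is the smaller of the two maximal obstacle dimensions, and $l$ grows at least like $\log$ of the halting time); you recast it as a reduction to the halting problem. These are two phrasings of the same argument.

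There is, however, a genuine gap. You assert that ``its state count and neighborhood size stay fixed (all the $\phiclassdir(i)$ share the same bounded format)'' and conclude that a recursive $\beta$ would collapse to a single constant $\beta_0$ across the whole family. This is false: in the construction of $\phiclassdir(i)$ the state $1$ of $\acf$ is replaced by the tile set $\tileset{i}$, so the number of states of $\phiclassdir(i)$ grows with the description size of $\machine{i}$. The ``standard Wang-tile simulation'' you invoke normalises the \emph{neighborhood}, not the number of tiles. Consequently the step ``$\beta$ recursive $\Rightarrow$ $\beta_0$ fixed'' is unjustified.

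The fix is easy and leaves your argument intact: the state count of $\phiclassdir(i)$ is a \emph{recursive} function of $i$ (it is essentially $|\tileset{i}|$ plus a constant), so the hypothetical lower bound $\beta$ evaluated on $\phiclassdir(i)$ is still recursive in $i$. From $2^{-c\,l(i)}\geq\beta(\phiclassdir(i))$ you then obtain a recursive upper bound on $l(i)$, hence on the halting time $T_i$, in terms of $i$; this still decides halting and yields the desired contradiction. Your anticipated ``main obstacle'' (that the maximal sensitivity constant really is that small) is exactly what the paper dispatches in one line by identifying it with $2^{-l/2+1}$; the issue you did not anticipate is the varying state count.
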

\begin{proof}
  It is straightforward to check that for each CA $\phiclassdir(i)$
  where $\machine{i}$ halts after $n$ steps on the empty input, the
  maximal admissible obstacle is of height $O(n)$ and of width at
  least $O(\log(n))$. The proposition follows since the sensitivity
  constant of any CA $\phiclassdir(i)\in\sensi$ is precisely ${2^{-l/2
      + 1}}$ where $l$ is the minimum between the largest height and
  the largest width of admissible obstacles.\qed
\end{proof}

Back to the path towards theorem~\ref{theo:undeci}, the following
proposition uses the same ideas as proposition~\ref{prop:classdir} but
it exchanges the role of halting and non-halting computations.

\begin{proposition}
  \label{prop:classinv}
  There is a recursive function ${\phiclassinv : \N\rightarrow CA}$
  such that ${\phiclassinv(i)\in\nono}$ if $\machine{i}$ halts on the
  empty input and ${\phiclassinv(i)\in\sensi}$ otherwise.
\end{proposition}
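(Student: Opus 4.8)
The plan is to modify the construction of $\acf$ in the same spirit as $\phiclassdir$ in proposition~\ref{prop:classdir}, but with the halting/non-halting roles swapped so that a \emph{halting} computation yields unbounded admissible obstacles (hence a CA in $\nono$) while a \emph{non-halting} computation forces a uniform size bound on obstacles (hence a sensitive CA). First I would recall the mechanism behind $\phiclassdir$: inside each obstacle one embeds, anchored at the lower-left corner, the tiling representation $\tileset{i}$ of the space-time diagram of $\machine{i}$ on the blank tape; in $\phiclassdir$ the \emph{final} state is forbidden by the tiling constraints, so an obstacle of height $h$ is admissible only if $\machine{i}$ has not halted within (roughly) the first $h$ steps — thus arbitrarily large obstacles exist exactly when $\machine{i}$ does not halt.

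To invert this, I would instead design the obstacle contents of $\phiclassinv(i)$ so that an obstacle is admissible only when it \emph{does contain a halting computation}: require that a valid obstacle carry a complete, correctly bordered space-time diagram of $\machine{i}$ from the blank tape that reaches the final state somewhere inside it. Concretely, the tiling constraints on $\obstacl{\phiclassinv(i)}$ would allow a rectangular obstacle of height $h$ and suitable width only if $h$ is at least the halting time $n$ of $\machine{i}$ and the diagram up to row $n$ is present and ends in the final state. If $\machine{i}$ halts after $n$ steps, then for every $h\geq n$ a valid obstacle of height $h$ exists (pad the diagram above the halting row with a fixed ``already halted'' filler, as in the usual padding arguments), so obstacles are unbounded and the three key lemmas of Section~\ref{sec:sensitivity} (lemma~\ref{lem:finiteattrak}, lemma~\ref{lem:partattack}, and the reasoning of proposition~\ref{prop:noequ}) go through verbatim with $\acf$ replaced by $\phiclassinv(i)$, and non-sensitivity follows exactly as in proposition~\ref{prop:nosens} (arbitrarily large inalterable well-formed obstacles exist). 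Hence $\phiclassinv(i)\in\nono$. If $\machine{i}$ does not halt, then no obstacle can ever be completed to a valid one — there is a computable bound (in fact no valid obstacle of any positive size whatsoever, or with a small fixed bound depending on encoding overhead) — so the particle-bypass mechanism is confined: the destruction rules turn any oversized region into $0$'s, and with only boundedly large obstacles the CA is sensitive exactly as argued for $\phiclassdir$ in the non-halting branch (a particle can always be made to reach a fixed free cell from a uniformly bounded distance, giving a fixed sensitivity constant). Recursivity of $i\mapsto\phiclassinv(i)$ is clear since $\tileset{i}$ is obtained recursively from $\machine{i}$ and the remaining gadget modifications are uniform.

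The main obstacle I anticipate is engineering the tiling/obstacle constraints so that ``admissible obstacle'' is genuinely equivalent to ``contains a completed halting computation'': one must handle the padding above the halting row, the bordering conditions along all four sides of the rectangle, and the interaction with the destruction rules of the CA so that partially-built but ultimately-completable obstacles behave correctly under lemma~\ref{lem:finiteattrak} (they must not be prematurely destroyed, yet every locally-checkable violation must still trigger a reset to $0$). A secondary technical point is ensuring that in the non-halting case the bound on obstacle sizes is uniform and small enough that the sensitivity constant is a fixed computable value independent of $i$, which is needed only for the qualitative statement here but matters for the phrasing; this is handled, as in proposition~\ref{prop:constant}, by reading off the constant from the largest admissible obstacle dimension. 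Since all of these are finite-type, locally-checkable conditions of the same flavour already used for $\acf$ and $\phiclassdir$, the construction is routine once the obstacle specification is fixed, and I would only sketch it rather than write out every tiling rule.
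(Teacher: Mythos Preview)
Your proposal is correct and follows essentially the same approach as the paper: modify $\phiclassdir$ so that an admissible obstacle must \emph{contain} the final state rather than avoid it, with the region above/after halting filled by a propagating ``already halted'' padding. The paper implements exactly this via three local rules---allow $q_f$, force everything right and above a $q_f$ to be $q_f$ (or border), and require the lower-left neighbor of the corner state $\oTR$ to be $q_f$---which resolves in one stroke the engineering issue you flag as your main obstacle.
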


The properties of the CA $\acf$ and the other constructions above rely
on the fact that obstacles able to stop or deviate particles cannot be
fit together to form larger obstacles. Thus, $\acf$ and other CA have
no equicontinuous point. In the following, we will use a new kind of
obstacles: they are protected from particles by a boundary as the
classical obstacles of $\acf$, but they are made only of successive
boundaries like onion skins. With this new construction it is not
difficult to build an equicontinuous point provided there are
arbitrarily large valid obstacles. The next proposition use this idea
to reduce existence of equicontinuous point to a tiling problem.

\begin{proposition}
  \label{prop:emboit}
  There is a recursive function $\phiemboit$ which associate with any
  tile set $T$ a CA $\phiemboit(T)$ which is in class $\equpt$ if $T$
  tiles the plane and in class $\nono$ otherwise.
\end{proposition}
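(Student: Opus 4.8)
The plan is to build $\phiemboit(T)$ following the same two-component scheme as $\acf$: a particle component ($U$, $D$, $0$) with exactly the left-moving / obstacle-bypassing dynamics of $\acf$, and an obstacle component whose admissible subshift $\obstacl{\phiemboit(T)}$ is now dictated by $T$. The key structural change is the shape of obstacles. Instead of solid rectangles, an admissible obstacle will be a \emph{nested} family of concentric square (or rectangular) rings, ``onion skins'', where the annular region between two consecutive boundary rings carries a valid $T$-tiling of the appropriate finite size (say, the $k$-th ring from the outside encodes, in the band it delimits, a legal patch of a $T$-tiling of width proportional to $k$). As for $\acf$, the local rule checks these finite-type constraints on a bounded neighborhood and destroys (turns to $0$) any cell belonging to a locally forbidden pattern; particles behave as before, bouncing off or going around the outermost boundary they meet but never penetrating a well-formed onion. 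The recursivity of $\phiemboit$ is immediate since $T\mapsto\obstacl{\phiemboit(T)}$ and the induced local rule are plainly computable from the (finite) domino constraints of $T$.

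Next I would prove the two implications. If $T$ does \emph{not} tile the plane, then by compactness there is a finite bound $N$ on the size of patches of $T$ that can be legally tiled; hence there is a bound on the number of nested skins an admissible obstacle can carry, hence a bound on the size of admissible obstacles. With bounded obstacles, the argument of Proposition~\ref{prop:nosens} is no longer available, and in fact the same reasoning that yields the sensitive CA in Proposition~\ref{prop:classdir} applies: bounded obstacles cannot shield a fixed-size region forever, while particles from far away (built as in Lemma~\ref{lem:partattack}) can always be routed to disturb any chosen free cell; so $\phiemboit(T)$ is sensitive. Combined with the fact that arbitrarily large onion-obstacles are impossible, one checks there are no equicontinuous points either (mimicking Proposition~\ref{prop:noequ}), so $\phiemboit(T)\in\nono$. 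Conversely, if $T$ \emph{does} tile the plane, then for every $n$ there is a valid onion-obstacle $O_n$ with $n$ skins, and $O_n$ fills the central $n\times n$ square. I would then exhibit an equicontinuous point: take $x$ to be the configuration which is a single infinite nest of skins filling the whole plane with a bi-infinite $T$-tiling (this is a valid configuration of $\obstacl{\phiemboit(T)}$ precisely because $T$ tiles $\zz$), or, more simply, the all-$0$ configuration — whichever is easier to analyze. Given $\epsilon$, pick $\delta$ so small that agreement on a large central window forces agreement on enough skins; any $\delta$-close $y$ then either already contains a large well-formed onion around the origin, which is inalterable and blocks every particle, or its central content gets corrected to $0$ in bounded time without ever affecting the window seen at precision $\epsilon$. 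Either way $d(\phiemboit(T)^t(x),\phiemboit(T)^t(y))\leq\epsilon$ for all $t$, so $x$ is equicontinuous and $\phiemboit(T)\in\equpt$.

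The main obstacle is the obstacle \emph{geometry}: I must design the finite-type constraints defining the onion-skin obstacles so that (i) a valid $k$-skin obstacle genuinely forces a legal $T$-tiling of size $\Theta(k)$ and conversely a legal $T$-tiling of size $n$ can be packaged into a valid obstacle, so that ``$T$ tiles the plane'' is equivalent to ``arbitrarily large admissible obstacles exist''; (ii) the particle-bypass rules of $\acf$ still make sense around a nested ring — in particular a particle meeting the outermost skin must circumnavigate the entire onion and never get trapped between two skins, which requires the corner/boundary markers of successive rings to be locally distinguishable; and (iii) when the nesting is violated somewhere, the destruction cascade turns the \emph{whole} malformed onion (or at least enough of it) back into $0$'s in finite time, so that Lemma~\ref{lem:finiteattrak} still holds for the new CA. Once the constraint set is correctly set up, the dynamical arguments are routine adaptations of Propositions~\ref{prop:nosens}, \ref{prop:noequ} and the mechanism behind Proposition~\ref{prop:classdir}; the crux is purely the bookkeeping of the tiling-into-obstacle encoding and checking it is local.
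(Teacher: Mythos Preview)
There is a genuine gap in the case where $T$ does \emph{not} tile the plane. You correctly argue that with only bounded admissible obstacles the CA becomes sensitive (this is indeed the mechanism behind Proposition~\ref{prop:classdir}), but you then conclude $\phiemboit(T)\in\nono$. This is a contradiction: $\nono$ is by definition the class of CA that are \emph{neither} sensitive \emph{nor} have equicontinuous points, so a sensitive CA lies in $\sensi$, not in $\nono$. Your construction, as described, yields a reduction distinguishing $\equpt$ from $\sensi$, which is already covered by Proposition~\ref{prop:varouch} and does not prove the present statement.

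The missing idea is that non-sensitivity must hold \emph{unconditionally}, independently of $T$. The paper achieves this by keeping the original solid-rectangle obstacles of $\acf$ and adding the $T$-obstacles as a \emph{second} kind of obstacle (states from $T\times X$, with the $X$-component carrying the onion-skin structure and the $T$-component a valid tiling). Since the classical obstacles are always available in arbitrary size, Proposition~\ref{prop:nosens} applies verbatim and $\phiemboit(T)$ is never sensitive. The role of the $T$-obstacles is then solely to decide the existence of equicontinuous points: if $T$ tiles the plane one builds an everywhere-$E$ configuration which is equicontinuous (your infinite-onion candidate), and if $T$ does not tile then every configuration eventually exposes a $0$ cell reachable by particles, so the proof of Proposition~\ref{prop:noequ} goes through and $\phiemboit(T)\in\nono$. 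A related technical point in the paper is that destruction of invalid $T$-obstacles must be \emph{progressive} (from the outside in, guided by the inside/outside markers of $X$) rather than a cascade wiping out the whole onion, precisely so that a large valid central core survives perturbations at the boundary --- this is what makes the infinite onion equicontinuous.

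A smaller error: the all-$0$ configuration is \emph{not} an equicontinuous point --- Proposition~\ref{prop:noequ} applies directly to it, since particles placed far away reach the origin unimpeded. Only the infinite valid onion works.
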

\begin{proof}[sketch]
  Given a tile set $T$, the CA $\phiemboit(T)$ is identical to $\acf$,
  except that it has a second kind of obstacles, called $T$-obstacles.
  $T$-obstacles are square patterns of states from the set ${E=T\times
    X}$ with ${X=\{\oT,\oB,\oR,\oL,\oTR,\oTL,\oBR,\oBL,\outo\}}$ and
  where the $T$ component is a valid tiling and the $X$ component is
  made from the set of ${2\times 2}$ patterns appearing in the
  following finite configuration:
  \[\tiny
  \begin{matrix}
     &&&  \minislot{\oT} & \minislot{\oT} \\
     &\minislot{\oTL} & \minislot{\oT} & \minislot{\oT} & \minislot{\oT} & \minislot{\oTR} \\
     \minislot{\oL}&\minislot{\oL} & \minislot{\oTL} & \minislot{\oT} & \minislot{\oTR} & \minislot{\oR}& \minislot{\oR} \\
     \minislot{\oL}&\minislot{\oL} & \minislot{\oL} & \minislot{\outo} & \minislot{\oR} & \minislot{\oR}& \minislot{\oR} \\
     &\minislot{\oL} & \minislot{\oBL} & \minislot{\oB} & \minislot{\oBR} & \minislot{\oR}  \\
     &\minislot{\oBL} & \minislot{\oB} & \minislot{\oB} & \minislot{\oB} & \minislot{\oBR}  \\
     &\minislot{\oB} & \minislot{\oB} & \minislot{\oB} & \minislot{\oB} & \minislot{\oB}  \\
     &&&  \minislot{\oB} & \minislot{\oB} \\
  \end{matrix}
  \]
  The $X$ component is used to give everywhere in $T$-obstacles a
  local notion of \emph{inside} and \emph{outside} as depicted by
  figure~\ref{fig:inout} (up to $\pi/2$ rotations): roughly speaking,
  arrows point to the inside region.  Other constraints concerning
  $T$-obstacles are checked locally:
  \begin{itemize}
  \item any pair of obstacles must be at least $2$ cells away from
    each other;
  \item their shape must be a square and this is ensured by requiring
    that any cell of a $T$-obstacle can have ${\{0,U,D\}}$ neighbors
    only in its outside region;
  \item the behaviour of particles with $T$-obstacles is the same as
    with classical obstacles (they can not cross them and states $U$
    and $D$ separate to bypass them).
  \end{itemize}  
  \begin{figure}
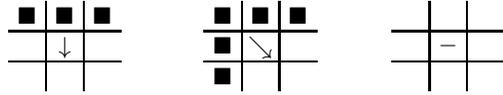

    \centering{\footnotesize
      \begin{tabular}{c|c|c}
        \nslot{\posout} & \nslot{\posout} & \nslot{\posout}\\
        \hline
        \nslot{\posin} & \nslot{\oT} & \nslot{\posin}\\
        \hline
        \nslot{\posin} & \nslot{\posin} & \nslot{\posin}
      \end{tabular} \hskip 1cm
      \begin{tabular}{c|c|c}
        \nslot{\posout} & \nslot{\posout} & \nslot{\posout}\\
        \hline
        \nslot{\posout} & \nslot{\oTL} & \nslot{\posin}\\
        \hline
        \nslot{\posout} & \nslot{\posin} & \nslot{\posin}
      \end{tabular}\hskip 1cm
      \begin{tabular}{c|c|c}
        \nslot{\posin} & \nslot{\posin} & \nslot{\posin}\\
        \hline
        \nslot{\posin} & \nslot{\outo} & \nslot{\posin}\\
        \hline
        \nslot{\posin} & \nslot{\posin} & \nslot{\posin}
      \end{tabular}}
    \caption{Inside (white) and outside (black) positions for states
      of $X$.}
    \label{fig:inout}
  \end{figure}
  The overall dynamics of $\phiemboit(T)$ is similar to that of $\acf$
  with the following exception, which is the key point of the
  construction: destruction of non-valid $T$-obstacles is done
  progressively to preserve as much as possible valid zones inside
  non-valid obstacles.  More precisely any cell in a state from $E$
  remains unchanged unless one of the following
  conditions is verified in which case it turns into state $0$:
  \begin{itemize}
  \item if there is an error in the inside neighborhood;
  \item if there is a position in the outside neighborhood such that
    the pattern formed by that position together with the cell itself
    is forbidden;
  \item if there is a state from $E$ in the $5\times 5$ neighborhood
    which is not connected to the cell by states from $E$.
  \end{itemize}
  One can check that proposition~\ref{prop:nosens} is still true.
  Moreover proposition~\ref{prop:noequ} is true if and only if their
  is a bound on the size of valid $T$-obstacle. Indeed,
  lemma~\ref{lem:finiteattrak} and \ref{lem:partattack} are always
  true and the only point which can be eventually false with the CA
  $\phiemboit(T)$ is the last case in the proof of
  proposition~\ref{prop:noequ}. Precisely, if a valid configuration
  $x$ such that ${\forall z, x(z)\in E}$ can be constructed, then it
  is an equicontinuous point. Otherwise, if such a $x$ is not valid,
  then it contains an error somewhere and the proof scheme of
  proposition~\ref{prop:noequ} can be applied to $\phiemboit(T)$. The
  proposition follows since such a valid $x$ can be constructed if and
  only if $T$ can tile the plane.\qed
\end{proof}

The previous propositions give a set of reductions from
Turing machines or tile sets to CA and one can easily check that the
main theorem follows using Berger's theorem~\cite{berger} and
classical results of the set of halting Turing machines.



To finish this section, we will discuss another difference between 1D
and 2D concerning the complexity of equicontinuous points. Let us
first recall that equicontinuous point in 1D CA can be generated by
finite words often called ``blocking'' words. Precisely, for any
$\acf$ with equicontinuous points, there exists a finite word $u$ such
that ${{}^\infty u^\infty}$ is an equicontinuous point for $\acf$
(proof in~\cite{Kurka97}). The previous construction can be used with
the tile set of Myers~\cite{myers} which can produce only
non-recursive tilings of the plane. Therefore the situation is more
complex in 2D, and we have the following proposition.

\begin{proposition}
  \label{prop:nonrecpoint}
  There exists a 2D CA having equicontinuous points, but only
  nonrecursive ones.
\end{proposition}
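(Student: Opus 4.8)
The plan is to instantiate the construction of Proposition~\ref{prop:emboit} with the tile set $T$ of Myers~\cite{myers}, which does tile the plane but admits no recursive tiling of it, and to take $G=\phiemboit(T)$. This is a genuine 2D CA (finite state set $\A\cup(T\times X)$ since $T$ is finite), and because $T$ tiles the plane, Proposition~\ref{prop:emboit} immediately gives ${G\in\equpt}$. So the whole work is to show that \emph{every} equicontinuous point of $G$ is nonrecursive.

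The first step is to pin down the shape of an arbitrary equicontinuous point $x$ of $G$: I claim that necessarily ${x(z)\in E}$ for all $z$ and that the $T$-component of $x$ is a tiling of the whole plane $\zz$ by $T$. This is exactly the content already sketched at the end of the proof of Proposition~\ref{prop:emboit}, and I would carry it out by the same scheme as Proposition~\ref{prop:noequ}. Suppose $x$ is not of this form. Then $x$ contains a ``defect'' at some fixed position: either a state outside $E$ (a $0$, a particle, or an old-style $\obstset$ state, which — exactly as in the last case of the proof of Proposition~\ref{prop:noequ} — is either already $0$ or turns into $0$ because it sits in a locally forbidden pattern), or else ${x(z)\in E}$ everywhere but with a local $T$-error, connectivity error, or $X$-inconsistency which, by the progressive destruction rule of $\phiemboit(T)$, forces a $0$ to appear at a fixed position. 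Now approximate $x$ by the finite configurations $y$ obtained by keeping $x$ on a large central ball and putting $0$ everywhere else; by Lemma~\ref{lem:finiteattrak}, $\acf^{t_0}(y)$ falls into $\obstacl{G}$, and the defect of $x$ guarantees that at some fixed position $z_0$ one has $(G^t(y))(z_0)=0$ for all large $t$. Applying Lemma~\ref{lem:partattack} (valid for $G$ by Proposition~\ref{prop:emboit}) to $z_0$, one can launch a particle arbitrarily far away reaching $z_0$, thus producing a configuration $y'$ as close to $x$ as desired with ${(G^n(y))(z_0)\neq(G^n(y'))(z_0)}$ for some $n$, which contradicts equicontinuity of $x$ for ${\epsilon=2^{-\|z_0\|_\infty-1}}$.

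The second step is then immediate. Let $x$ be an equicontinuous point of $G$; by the first step its $T$-component $\pi(x)$ is a tiling of $\zz$ by $T$. If $x$ were recursive, i.e. ${z\mapsto x(z)}$ were a computable map, then ${z\mapsto\pi(x)(z)}$ would be computable as well, giving a recursive tiling of the plane by $T$ — impossible by the defining property of Myers's tile set. Hence $x$ is nonrecursive. Since $G=\phiemboit(T)$ is in $\equpt$ and every one of its equicontinuous points is nonrecursive, the proposition follows.

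I expect the main obstacle to be the first step, namely checking that there are no ``exotic'' equicontinuous points: besides the obvious configurations containing $0$'s, one must rule out configurations built entirely from old $\obstset$ obstacles, configurations whose $E$-part is an infinite half-plane or quarter-plane rather than a genuine nested square, and all-$E$ configurations whose $T$- or $X$-component is only locally consistent but not globally valid. In each case the point is that the defect eventually clears a \emph{fixed} position to state $0$ and keeps it free, so that the particle attack of Lemma~\ref{lem:partattack} applies; establishing this uniformly requires a careful reading of the $2\times 2$ constraints of the $X$-component and of the ``preserve valid zones'' destruction rule — precisely the verifications left implicit in the sketch of Proposition~\ref{prop:emboit}.
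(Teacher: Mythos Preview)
Your proposal is correct and follows exactly the route the paper intends: instantiate the construction of Proposition~\ref{prop:emboit} with Myers's tile set and use that every equicontinuous point of $\phiemboit(T)$ must be a valid all-$E$ configuration whose $T$-component is a tiling of $\zz$. The paper's own argument is the single sentence preceding the proposition (``The previous construction can be used with the tile set of Myers~\cite{myers}\ldots''), so you are supplying precisely the verification the paper leaves implicit --- in particular your ``first step'' is the converse direction that the sketch of Proposition~\ref{prop:emboit} only hints at, and your caveats about exotic configurations (half-planes, mixed $\obstset$/$E$ content, locally-but-not-globally valid $X$-layers) are exactly the points that need checking.
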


\section{Open problems}
\label{sec:open}

It is well-known that equicontinuous CA are exactly ultimately
periodic CA (if they are also bijective, they are periodic).  The
proof techniques developed by Kari in~\cite{kari94} allow to prove
that there is no recursive lower-bound on the pre-period and period of
2D equicontinuous CA. An interesting open question in the continuation
of this paper is to determine whether periods of 1D equicontinuous CA
(bijective or not) can be recursively bounded or not. The only known
result in 1D is that pre-periods are not recursively bounded (this is
essentially the nilpotency problem).

It is interesting to notice that for 1D CA, classes $\sensi$ and
$\equpt$ are easily definable in first-order arithmetic. This is due
to the characterisation by blocking words mentioned above: the
existential quantification over configurations can be replaced by a
quantification over finite words in the definition of $\equpt$.
Proposition~\ref{prop:nonrecpoint} shows that first-order definability
of $\sensi$, $\equpt$ or $\nono$ for 2D CA is more challenging. We
believe they are but at a higher level in the arithmetical hierarchy.

\bibliographystyle{splncs}
\bibliography{ac}

\newpage
\appendix

\section{Proof of lemma~\ref{lem:finiteattrak}}

\begin{proof}
  First, the set ${\bigl\{z : x(z)\in\obstset\}}$ is finite and
  decreasing under the action of $\acf$. Moreover, $U$ and $D$ states
  can only move left, or move vertically or disappear. Since the total
  amount of vertical moves for $U$ and $D$ states is bounded by the
  cardinal of ${\bigl\{z : x(z)\in\obstset\}}$, there is a time $t$
  after which no $U$ or $D$ state is a neighbor of a state of
  $\obstset$, and each $U$ is above a $D$ in a $0$ background (the
  $UD$ particle is on the left of the finite non-$0$ region).  From
  this time on, the evolution of cells in a state of $\obstset$ is
  governed only by the first case of the definition of
  $\acf$. Therefore, after a certain time, finite type conditions
  defining $\obst$ are verified everywhere. To conclude, it is easy to
  check that $\obst$ is stable under the action of $\acf$.\qed
\end{proof}

\section{Proof of proposition~\ref{prop:classdir}}

\begin{proof}[sketch]
  The CA $\phiclassdir(i)$ is constructed from $\acf$ by replacing the
  state $1$ by the tile set $\tileset{i}$ and by defining the obstacle
  subshift $\obstacl{\phiclassdir(i)}$ through the set of all
  ${3\times 3}$ patterns satisfying the following conditions:
  \begin{itemize}
  \item constraints used in the simulation of $\machine{i}$ by $\tileset{i}$ apply;
  \item when replacing states of $\tileset{i}$ by $1$, the resulting
    pattern must be an admissible pattern for $\obst$;
  \item the only allowed state as upper-right neighbor of $\oBL$ is
    $q_0$, the initial state in the simulation of $\machine{i}$ by
    $\tileset{i}$;
  \item the only allowed state above $\oB$ is the blank tape symbol of
    the simulation of $\machine{i}$ by $\tileset{i}$ when $\oBL$ is
    not in the neighborhood;
  \item no final state of the simulation of $\machine{i}$ by
    $\tileset{i}$ is allowed anywhere.
  \end{itemize}
  It is clear from these conditions, that any admissible obstacle of
  $\obstacl{\phiclassdir(i)}$ contains the beginning of the
  computation of $\machine{i}$ on the empty tape.  Since any field of
  obstacles allowed in $\obstacl{\phiclassdir(i)}$ is also allowed in
  $\obst$ (when mapping $\tileset{i}$ to $1$), the proof of
  proposition~\ref{prop:noequ} is still valid and $\phiclassdir(i)$ is
  therefore either in class $\sensi$ or in class $\nono$. Moreover, as
  shown in the proof of proposition~\ref{prop:nosens}, the fact that
  $\phiclassdir(i)$ is sensitive or not depends only on the existence
  of arbitrarily large admissible obstacle. So, by construction,
  $\phiclassdir(i)\in\sensi$ if and only if $\machine{i}$ halts on the
  empty input.\qed
\end{proof}

\section{Proof of proposition \ref{prop:classinv}}

\begin{proof}[sketch]
  Following the proof of proposition~\ref{prop:classdir},
  $\obstacl{\phiclassinv(i)}$ is the same as
  $\obstacl{\phiclassdir(i)}$ with the following differences:
  \begin{itemize}
  \item the final state $q_f$ in the simulation of $\machine{i}$ by
    $\tileset{i}$ is allowed;
  \item the only states allowed at the right and above an occurrence of
    $q_f$ are $q_f$, $\oT$ and $\oR$;
  \item the only state allowed as a lower-left neighbor of $\oTR$ is $q_f$.
  \end{itemize}
  It follows directly that an admissible obstacle must contain the
  final state. Moreover, if $\machine{i}$ halts on the empty input,
  admissible obstacles can be arbitrarily large. The proposition
  follows by a straightforward adaptation of the proofs of
  propositions~\ref{prop:nosens} and \ref{prop:noequ}.\qed
\end{proof}

\end{document}